\documentclass[11pt]{article}
\usepackage[utf8]{inputenc}
\usepackage[english]{babel}
\usepackage[T1]{fontenc}
\usepackage[margin=1.15in]{geometry}
\usepackage{amsmath}
\usepackage{amsthm}
\usepackage{amsfonts}
\usepackage{graphicx,subfig}
\usepackage{float}
\usepackage{pgfplots}
\pgfplotsset{compat=1.14, set layers}
\usepackage{bm}

\usepackage{mathtools}
\usepackage{footmisc}
\usepackage{amssymb}
\usepackage{subfig}
\usepackage{enumitem}
\usepackage{setspace}
\usepackage{caption} 
\usepackage{float}
\usepackage{thm-restate}
\usepackage{xspace,xcolor}
\usepackage{thmtools}
\usepackage{framed}
\usepackage{soul}
\usepackage{algorithm}
\usepackage[noend]{algpseudocode}

\definecolor{refcolor}{rgb}{0.23, 0.27, 0.29}
\usepackage[breaklinks,colorlinks,urlcolor={refcolor},citecolor={refcolor},linkcolor={refcolor}]{hyperref}
\usepackage[capitalize, nameinlink]{cleveref}

\newtheorem{theorem}{Theorem}[section]
\newtheorem*{theorem*}{Theorem}
\newtheorem{lemma}[theorem]{Lemma}
\newtheorem{fact}[theorem]{Fact}

\newtheorem{corollary}[theorem]{Corollary}
\newtheorem{definition}[theorem]{Definition}
\newtheorem{remark}[theorem]{Remark}

\newtheorem{observation}[theorem]{Observation}

\newtheorem*{fact*}{Fact}

\newtheorem*{lemma*}{Lemma}
\newtheorem*{definition*}{Definition}

\Crefname{paragraph}{Paragraph}{Paragraphs}
\Crefname{observation}{Observation}{Observations}
\Crefname{restatement}{Restatement}{Restatements}
\Crefname{fact}{Fact}{Facts}

\newcommand{\ampc}{\textsf{AMPC}\xspace}
\newcommand{\mpc}{\textsf{MPC}\xspace}
\newcommand{\lca}{\textsf{LCA}\xspace}
\newcommand{\local}{\textsf{LOCAL}\xspace}
\newcommand{\volume}{\textsf{VOLUME}\xspace}
\newcommand{\clique}{\textsf{CONGESTED CLIQUE}\xspace}

\DeclareMathOperator*{\E}{\mathbf{E}}

\newcommand{\eps}{\varepsilon}
\newcommand{\poly}{\operatorname{\text{{\rm poly}}}}

\newcommand{\arr}{\xrightarrow{}}

\newcommand{\larr}{\xleftarrow{}}

\begin{document}
	
\begin{center}
	\begin{minipage}[H]{14.5cm} 
		
		\begin{center}
		{\huge \bf Adaptive Massively Parallel Coloring in Sparse Graphs}
			\end{center}
		\vspace{1cm}
		
		{\large \textbf{Rustam Latypov\footnotemark[1]}, Aalto University -- \href{mailto:rustam.latypov@aalto.fi}{\texttt{rustam.latypov@aalto.fi}}} \vspace{1mm}\\
		{\large \textbf{Yannic Maus\footnotemark[2]}, TU Graz -- \href{mailto:yannic.maus@ist.tugraz.at}{\texttt{yannic.maus@ist.tugraz.at}}} \vspace{1mm}\\
		{\large \textbf{Shreyas Pai\footnotemark[3]}, IIT Madras -- \href{mailto:shreyaspai.24@gmail.com}{\texttt{shreyas@cse.iitm.ac.in}}} \vspace{1mm}\\
		{\large \textbf{Jara Uitto}, Aalto University -- \href{mailto:jara.uitto@aalto.fi}{\texttt{jara.uitto@aalto.fi}}} \vspace{1mm}\\
		
		\vspace{5mm}
		
		\begin{abstract}
			Classic symmetry-breaking problems on graphs have gained a lot of attention in models of modern parallel computation. The Adaptive Massively Parallel Computation (\ampc) is a model that captures the central challenges in data center computations. Chang et al.\ [PODC'2019] gave an extremely fast, constant time, algorithm for the $(\Delta + 1)$-coloring problem, where $\Delta$ is the maximum degree of an input graph of $n$ nodes. The algorithm works in the most restrictive \emph{low-space} setting, where each machine has $n^{\delta}$ local space for a constant $0 < \delta < 1$.
			
			The standard approaches for $(\Delta + 1)$-coloring are ignorant about the graph topology in the following sense: They exploit the property that \emph{any} partial coloring can be extended to a feasible $(\Delta + 1)$-coloring of the whole graph. For most graphs, the chromatic number is much smaller than $\Delta + 1$ and we would like to find colorings with fewer colors. However, as soon as we have fewer than $\Delta + 1$ colors, it might not be possible to complete partial colorings.
			
			In this work, we study the vertex-coloring problem in sparse graphs parameterized by their \emph{arboricity} $\alpha$, a standard measure for sparsity. We give deterministic algorithms that in constant, or almost constant, time give $\poly \alpha$ and $O(\alpha)$-colorings, where $\alpha$ can be arbitrarily smaller than $\Delta$. A strong and standard approach to compute arboricity-dependent colorings is through the Nash-Williams forest decomposition, which gives rise to an (acyclic) orientation of the edges such that each node has a small out-degree.
			
			Our main technical contribution is giving efficient deterministic algorithms to compute these orientations and showing how to leverage them to find colorings in low-space \ampc. A key technical challenge is that the color of a node may depend on \emph{almost all} of the other nodes in the graph and these dependencies cannot be stored on a single machine. Nevertheless, our novel and careful exploration technique yields the orientation, and the arboricity-dependent coloring, with a sublinear number of adaptive queries per node.
		\end{abstract}
		
		\end{minipage}
	
	\thispagestyle{empty}
	\footnotetext[1]{Supported by the Research Council of Finland, Grant 334238.}
	\footnotetext[2]{Supported by the Austrian Science Fund (FWF), Grant P36280-N.}
	\footnotetext[3]{This work was done while Shreyas Pai was a postdoctoral researcher at Aalto University. Supported in part by Research Council of Finland, Grant 334238, and Helsinki Institute for Information Technology HIIT.}

\end{center}

\newpage
\thispagestyle{empty}
\tableofcontents

\newpage
\pagenumbering{arabic}

\section{Introduction}
In this paper, we study the graph coloring problem in sparse graphs in a modern model of parallel computation, the \ampc-Model~\cite{Behnezhad2019-remote}. 
The Adaptive Massively Parallel Computation (\ampc) model of computation captures the challenges in modern platforms of processing massive data.
We have $P$ machines with $S = O(n^\delta)$ local space each, where $0 < \delta < 1$ and $n$ is the number of nodes in the input graph.
In synchronous rounds, the machines can communicate through a distributed key-value storage.
The machines have read access to the storage and, at the end of each round, the machines can write to the storage.
A crucial limitation is that the local space limits the number of reads and writes per round.
The \ampc model is a practically motivated extension of the standard \mpc model, where communication is restricted to all-to-all synchronous message-passing~\cite{KarloffSV10}. 

Before delving into our results, let us explain the necessary background.
The goal of the $k$-vertex-coloring problem is to assign a color out of $k$ colors to every node of a graph such that no neighboring nodes have the same color.
This is a classic and central problem that has been intensively studied for decades. 
Vertex-coloring also has close connections to other classic problems, such as the maximal independent set and scheduling.

In the distributed and parallel setting, the typical goal is to color the graph with $\Delta + 1$ colors, where $\Delta$ corresponds to the maximum degree of the input graph.
Indeed, any graph can be colored with $\Delta + 1$ colors, and for cliques and odd cycles, this is tight. 
However, many graphs can be colored with significantly fewer than $\Delta + 1$ colors. 
For example, sparse graphs can have a high maximum degree but can often be colored with much fewer colors.
Sparse graphs often occur in practice, which is causing a demand for better coloring algorithms for sparse graphs.

\paragraph{Arboricity.}
In this work, we study the vertex-coloring problem in graphs parameterized by \emph{arboricity}, which is a standard measure of ``everywhere'' sparsity.
The arboricity $\alpha$ corresponds to the minimum number of forests that the edges of a graph can be partitioned into~\cite{NashWilliams1964}. 
It is well known that any graph can be colored with $2\alpha$ colors, and in some cases this is tight.
As a straightforward example, a tree clearly has arboricity $1$ and can be colored with $2$ colors, while the maximum degree $\Delta$ can be up to $n - 1$.
Hence, arboricity captures the number of colors needed to color sparse graphs.

Finding \emph{arboricity-dependent} colorings with $O(\alpha)$, $\poly(\alpha)$, or $\exp(\alpha)$ colors has gained attention in various models of computation like \local \cite{BE10,barenboim10, barenboim15,fraigniaud16,MausT22,MausSPAA21}, \mpc \cite{Ghaffari2019-arboricity,Fischer2023,bera_et_alICALP2020}, dynamic algorithms \cite{HenzingerNW2020,Christiansen2023,ChristiansenRICALP2022, bhattacharya2023arboricity}, and streaming~\cite{bera_et_alICALP2020}.
A common approach is to find an acyclic orientation of the edges such that each node has an out-degree of $O(\alpha)$ or $\poly \alpha$.
Starting from sinks, the nodes can then be iteratively colored with out-degree + 1 colors.
Typically, the out-degree gives a trade-off between the runtime and the number of required colors.
This approach to coloring highlights a crucial difference between $(\Delta + 1)$-coloring and arboricity-dependent coloring and pinpoints a key challenge.
In the former case, any partial solution can be extended to a complete solution, but this is not the case for the latter.
In this light, the iterative coloring approach can be seen as carefully choosing partial solutions that can be extended to a complete solution.

In recent work, the problem has been studied in the context of dynamic algorithms~\cite{Christiansen2023}.
They showed that \underline{\smash{\emph{given} such an orientation,}} there is an efficient Local Computation Algorithm (\lca) to find an arboricity-dependent coloring. They also showed how to maintain such an orientation efficiently.
In the \lca model, each node can be queried for the $i$th entry in their adjacency list and their degree~\cite{lca11}. 
The goal is to implement an oracle that returns the output at any node $v$, say, the color of $v$ in the coloring problem. The crux is that the outputs (per node) must yield a consistent and feasible solution throughout the graph.

In our setting, the orientation is not given for free and one of our contributions is an \lca to find an acyclic orientation with a small out-degree.
As pointed out by the authors in~\cite{Christiansen2023}, there is a caveat to designing such an \lca.
Although there is no formal proof, it seems likely that the problem has a high volume, that is, any \lca needs to do linear (in $n$) queries in the worst case.
To circumvent this obstacle, we design an \lca that orients \emph{almost} all of the edges. We show that this \lca is sufficient for designing efficient \ampc algorithms to compute acyclic low out-degree orientations orienting \emph{all} edges. 
Lastly, we obtain various \ampc coloring algorithms that employ different trade-offs between their runtime and the number of colors used. We next detail these three contributions in order, (1) the \lca for computing ``weak orientations'', (2) \ampc algorithms for computing acyclic low out-degree orientations, and (3) \ampc coloring algorithms. 
We emphasize that all our algorithms are \emph{deterministic}.

\bigskip 

\noindent\textbf{Contribution 1: An \lca for computing a weak orientation.}
A $\beta$-partition of a graph is a partition of its vertex set into layers such that every node has at most $\beta$ neighbors in higher (or the same) layers. The size of the partition is the number of layers. This partition naturally gives rise to the desired orientation required for efficient graph coloring algorithms by orienting edges from smaller to larger layers while orienting them arbitrarily within the layers. In graphs with arboricity at most $\alpha$, it is known that such partitions with $O(\log n)$ layers exist when $\beta\geq (2+\eps)\alpha$ for any constant $\eps>0$ \cite{BE10}. Our core building block and also the largest challenge of our approach is the design of an \lca algorithm that computes a weak version of the desired partition. 
In \Cref{sec:nutshell} we underline the technical challenges associated with obtaining such a result. 
The following simplified version of the main lemma assumes that $\beta$ and $\alpha$ are constant and that $\beta\geq (2+\eps)\alpha$ holds for any $\eps>0$. We denote by $G[S]$ the subgraph induced by the node set $S$ in graph $G$.

\begin{lemma}[Simplified version of \Cref{lem:coinDropLCAformal}]
	\label{lem:coinDropLCA}
	For any constant $\delta>0$ there is deterministic \lca algorithm that uses at most $O(n^{\delta})$ queries per node on a graph $G$ with arboricity $\alpha$ and assigns each node a layer from $\mathbb{N}\cup \{\infty\}$ such that the following holds: 
	
	$\triangleright$ There exists a subset $S\subseteq V$  containing at least a $1-1/n^{O(\delta)}$ fraction of vertices such that the layering of $S$ forms a $\beta$-partition of $G[S]$ with $O(\log_{\beta} n)$ layers. $\triangleleft$
\end{lemma}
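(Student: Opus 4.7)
The plan is to realize the Barenboim--Elkin (BE) iterative peeling procedure as an \lca with a budget of $T = n^{\delta}$ adjacency queries per vertex, declaring $L(v) = \infty$ whenever the exploration for $v$ exceeds the budget. Recall that BE layers a graph by repeatedly removing all vertices of residual degree at most $\beta$; because $\beta \geq (2+\eps)\alpha$, a constant fraction $c = \eps/(2+\eps)$ of the remaining vertices is peeled per iteration, producing $O(\log_\beta n)$ layers with $|L_i| \leq (1-c)^{i-1} n$. Equivalently, $L(v)$ is the smallest $i$ such that at most $\beta$ neighbors $u \in N(v)$ satisfy $L(u) \geq i$, which gives a recursive description suitable for local simulation.

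For a queried vertex $v$, I would grow an exploration graph rooted at $v$, iteratively resolving the thresholds $L(u) \geq i$ for $u \in N(v)$ starting from $i=1$ and recursing into neighbors of neighbors as needed. The recursion is asymmetric: certifying $L(u) \geq i$ needs only $\beta + 1$ upward witnesses, whereas certifying $L(u) < i$ must examine all of $u$'s neighbors. The algorithm interleaves these two certifications, advancing to iteration $i+1$ only after the current one is settled, and maintains a global counter of distinct vertices visited; as soon as the counter exceeds $T$, the procedure aborts and emits $L(v) = \infty$. Set $S := \{v : L(v) \neq \infty\}$. Every successful answer reproduces the true BE layer, and removing vertices can only shrink neighborhoods, so restricted to $G[S]$ the layering remains a $\beta$-partition with at most $O(\log_\beta n)$ layers, as required.

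The hard part of the proof---and the main obstacle---is bounding $|V \setminus S|$. I would first show that if $L(v) \leq \delta' \log_\beta n$ for a suitable $\delta' = \Theta(\delta)$, then the exploration fits within $T$ queries. The upward side branches by at most $\beta$ and has depth $L(v)$, contributing $\beta^{L(v)} \leq n^{\delta'}$ nodes. The downward side, which naively could branch by $\Delta$, is the delicate piece: I would tame it by an arboricity-based charging argument that amortizes visited edges against the global edge budget $\alpha n$, together with the observation that once $\beta + 1$ upward witnesses for a subquery have been located further exploration on that branch can be cut off. This keeps the total exploration within $n^{O(\delta)} \leq T$. Once the cost bound is in place, the geometric decay $|L_i| \leq (1-c)^{i-1} n$ yields
$|V \setminus S| \leq \sum_{i > \delta' \log_\beta n} |L_i| = n \cdot n^{-\Omega(\delta)},$
matching the claimed $1 - 1/n^{O(\delta)}$ fraction and completing the proof.
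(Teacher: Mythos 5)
Your proposal aims to directly simulate the Barenboim--Elkin peeling as an \lca, but the critical step---bounding the per-vertex exploration cost---is exactly where the paper spends most of its technical effort, and the sketch you give there does not hold up.

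First, the ``downward'' side of your recursion (certifying $L(u)<i$ by examining all of $u$'s neighbors) runs into precisely the BFS failure mode the paper highlights: a single neighbor of $v$ that lies \emph{outside} $v$'s dependency graph may have degree $n^{\delta}$, and querying it immediately exhausts the budget without learning anything useful about $v$'s layer. An amortization against the global edge count $\alpha n$ cannot rescue this, because in the \lca model each queried vertex gets its \emph{own} budget of $n^{\delta}$; the model requires a per-vertex worst-case bound, not a bound on the total work summed over all root vertices, so there is no shared budget to charge against.

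Second, your claim that a small layer $L(v)\le\delta'\log_\beta n$ alone guarantees a small exploration is false: a vertex $v$ in layer $1$ can have degree $\Theta(n)$, and its dependency graph (the layer-$0$ neighbors you must certify) is then linear in $n$. The paper's Lemma~\ref{lem:removeEnoughNodes} needs \emph{two} conditions to define the good set $X$: $\ell_\beta(v)\le\log_{\beta+1}x$ \emph{and} $|D(\ell_\beta,v)|\le x^2$; each excludes a different, separately bounded bad set. Your proposal only controls the layer, so it cannot obtain the required fraction bound.

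The paper's actual solution is the coin dropping game of Section~\ref{sec:centralizedPartition}: rather than recursively certifying thresholds, the root $v$ distributes $x$ coins down forwarding sets of size $\beta+1$ chosen by the \emph{currently-estimated} induced $\beta$-partition $\sigma_{S_v,\beta}$ of the subgraph $S_v$ seen so far, with the rules updated every super-iteration. The key progress lemma (Lemma~\ref{lem:measureOfprogress}, via Lemma~\ref{lem:recursion}) shows that as long as $\sigma_{S_v,\beta}(v)>\ell_\beta(v)$, each super-iteration necessarily discovers a new vertex of $D(\ell_\beta,v)$, so after $x^2$ super-iterations the estimated layer must be correct---\emph{without} ever guaranteeing $D(\ell_\beta,v)\subseteq S_v$. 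This adaptive, ``volume-balanced'' forwarding (with at most $\beta+1$ recipients per node per iteration) is what caps the total vertices visited at $x$ per super-iteration and yields the $x^{6}$ query bound. None of this machinery appears in your sketch, and without it the cost bound cannot be established.
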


\bigskip
\noindent
\textbf{Contribution 2: \ampc algorithms for finding acyclic low out-degree orientations.} Our second contribution is designing low-space \ampc algorithms to find acyclic orientations in graphs with any arboricity $\alpha$.
Ultimately, our \ampc algorithm iteratively uses the above \lca to find a complete partition and the number of nodes that are not layered (after one application of the \lca) dictates the round complexity of the \ampc algorithm.

Essentially, \Cref{lem:coinDropLCA} can be applied recursively\footnote{This recursion does not work as stated, because we have omitted some details to simplify the exposition. We explain these subtleties in \Cref{sec:recursion-details}.} on the subgraph induced by the nodes that are assigned an $\infty$ layer, yielding $O(1/\delta)=O(1)$ recursion levels and implying the following theorem.

\begin{restatable}[$\beta$-partitioning]{theorem}{partitioning}
	\label{thm:partitioning}
	For any constant $\eps>0$ and any $\beta\geq (2+\eps)\alpha$ there is a deterministic \ampc algorithm that computes a $\beta$-partition of size $O(\log_{\beta/(2\alpha)}n)$ in $O(\log_{\beta/(2\alpha)}\beta)$ rounds on any $n$-node graph with arboricity $\alpha$.
	The algorithm requires $O(n^{\delta})$ local space and $O(n^{1 + \delta})$ total space for any constant $\delta > 0$. In particular,
	\begin{itemize}
		\item when $\beta=O(\alpha)$, we get size $O(\log n)$ in $O(\log \alpha)$ rounds and 
		\item when $\beta=O(\alpha^{1+\eps})$, we get size $O(\log_\alpha n)$ in $O(1)$ rounds.
	\end{itemize}
\end{restatable}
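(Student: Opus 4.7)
The plan is to iterate \Cref{lem:coinDropLCA} in \ampc. The \lca\ of that lemma can be simulated in \ampc\ by assigning each vertex to a dedicated machine that performs its $O(n^{\delta})$ adaptive queries via the adaptive-read mechanism of the model; the $O(\log_{\beta/(2\alpha)}\beta)$ sequential phases of the underlying coin-drop/peeling process translate into $O(\log_{\beta/(2\alpha)}\beta)$ \ampc\ rounds per invocation, using $O(n^{\delta})$ local space and $O(n^{1+\delta})$ total space since $n$ machines each read $n^{\delta}$ items. After one simulation, a fraction $1 - 1/n^{\Theta(\delta)}$ of the vertices (the set $S$) is placed into $O(\log_{\beta/(2\alpha)} n)$ layers forming a $\beta$-partition of $G[S]$, and only the residual $V\setminus S$ of size $n^{1-\Theta(\delta)}$ remains unlayered.

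I would then recurse on $G[V\setminus S]$, which has arboricity at most $\alpha$, using the same parameter $\beta$, and stack the newly produced layers above those already assigned. Since the residual shrinks by a factor of $n^{\Theta(\delta)}$ per level, after $O(1/\delta)=O(1)$ recursion levels no vertex remains unlayered. The total number of layers is the sum of $O(1)$ blocks of size at most $O(\log_{\beta/(2\alpha)} n)$, which is still $O(\log_{\beta/(2\alpha)} n)$; the total round count is the sum of $O(1)$ blocks of $O(\log_{\beta/(2\alpha)}\beta)$ rounds, which is still $O(\log_{\beta/(2\alpha)}\beta)$. The two itemized consequences then fall out mechanically: for $\beta=\Theta(\alpha)$ the base $\beta/(2\alpha)$ is a constant, so we get $O(\log n)$ layers in $O(\log\alpha)$ rounds, while for $\beta=\Theta(\alpha^{1+\eps})$ the base is $\alpha^{\Theta(\eps)}$, yielding $O(\log_\alpha n)$ layers in $O(1)$ rounds.

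The main obstacle will be ensuring that the concatenation preserves the global $\beta$-partition property. A vertex $v\in S$ placed by the \lca\ at some layer $\ell$ sees, in the stacked partition, every neighbor in $V\setminus S$ as a same-or-higher-layer neighbor; unless these cross-edges have already been charged to $v$'s budget of $\beta$, the property fails. This is exactly the subtlety acknowledged by the footnote of the statement, and is to be resolved in \Cref{sec:recursion-details} by exploiting the stronger guarantee that the coin-drop \lca actually provides: when $v\in S$ is assigned layer $\ell$, the bound of $\beta$ covers all of $v$'s neighbors that are either in $S$ at layer $\geq \ell$ \emph{or} in $V\setminus S$ (the latter being layered only in future recursion levels). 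Once this coupling is set up, the stacked partitions combine into a valid $\beta$-partition of the entire graph, yielding the claimed guarantees.
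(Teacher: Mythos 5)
There is a genuine gap in your accounting of the round complexity, plus two further omissions.

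\textbf{Round complexity.} You claim that one invocation of the \lca costs $O(\log_{\beta/(2\alpha)}\beta)$ \ampc rounds because of ``sequential phases of the coin-drop/peeling process.'' This is not how the simulation works: the entire coin-drop game for a node costs $x^6=O(n^\delta)$ adaptive queries, and a dedicated machine performs all of them in a \emph{single} \ampc round. You also claim the residual shrinks by a factor of $n^{\Theta(\delta)}$ per invocation, and that $O(1/\delta)$ invocations suffice. That is the ``simplified'' bound from \Cref{lem:coinDropLCA}, which the paper explicitly states only under the assumption that $\alpha$ and $\beta$ are constant. The formal bound (\Cref{lem:coinDropLCAformal}, \Cref{lem:removeEnoughNodes}) gives a residual fraction of $2^{1-\log x/\log_{\beta/(2\alpha)}(\beta+1)}$, which with $x=n^{\delta/c}$ is $\Theta\bigl(n^{-\delta/(c\log_{\beta/(2\alpha)}(\beta+1))}\bigr)$. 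When $\beta=\Theta(\alpha)$ the base is a constant and $\log_{\beta/(2\alpha)}(\beta+1)=\Theta(\log\alpha)$, so the per-round residual is $n^{-\Theta(\delta/\log\alpha)}$, not $n^{-\Theta(\delta)}$; consequently one needs $\Theta(\log\alpha/\delta)=\Theta(\log_{\beta/(2\alpha)}\beta)$ invocations, not $O(1/\delta)$. Your two incorrect quantities happen to multiply to roughly the right final round count, but each one individually is wrong, and the paper's derivation (the chain of inequalities following \Cref{line:drop}) proceeds quite differently: one round per invocation, with the number of invocations governed by the $\log_{\beta/(2\alpha)}(\beta+1)$-dependent shrinkage.

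\textbf{Consistency across recursion.} You correctly flag that naively stacking partial $\beta$-partitions is dangerous, but your diagnosis of the danger is off. By \Cref{def:hPartition}, a node $v$ with a finite layer already has $\leq\beta$ neighbors in $V_{\geq\lambda(v)}$ \emph{including} the $\infty$-layer, so cross-edges from $S$ into $V\setminus S$ are automatically charged; that part is fine. The actual problem, spelled out in \Cref{sec:recursion-details}, is that there may be nodes \emph{outside} $S$ that also output a finite layer, and their layer claims may be mutually inconsistent (one node assumes its neighbor is in layer $9$ while that neighbor places itself in layer $11$). The resolution is the stronger \lca in \Cref{rem:LCAwithMinimum}: each queried node outputs a full partial $\beta$-partition (a ``proof''), and \Cref{lem:closureUnderMinimum} shows the pointwise minimum of these proofs is still a partial $\beta$-partition. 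Your proposal does not include the proof-output / minimum-merge mechanism, so you have not actually resolved the consistency issue.

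\textbf{Large $\alpha$.} When $\alpha$ (hence $\beta$) grows past roughly $n^{\Omega(\delta^2)}$, the $x^6$-query budget no longer meets the preconditions of the game analysis, and the paper switches to an entirely different algorithm: directly simulating the Barenboim--Elkin layer-by-layer peeling, which runs in $O(\log_{\beta/(2\alpha)}n)=O(\log_{\beta/(2\alpha)}\beta)$ rounds in that regime. Your proposal handles only the small-$\alpha$ regime.
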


\bigskip
\noindent
\textbf{Contribution 3: Coloring sparse graphs in \ampc.}
We design \ampc algorithms for coloring with different trade-offs between their runtime and the number of colors used.

\begin{restatable}[Coloring Results]{theorem}{coloringthm}
	\label{thm:coloring}
	For any constant $\eps > 0$ on graphs with possibly non-constant arboricity $\alpha$, there are deterministic low-space \ampc algorithms to compute:
	\begin{enumerate}
		\item An $O(\alpha^{2+\eps})$-coloring in $O(1/\eps)$ rounds.
		\item An $O(\alpha^2)$-coloring in $O(\log \alpha)$ rounds.
		\item A $((2+\eps)\alpha + 1)$-coloring in $\tilde{O}(\alpha/\eps)$ rounds.\footnote{The $\widetilde{O}(\cdot)$ notation hides $O(\log \alpha)$ factors.}
	\end{enumerate}
\end{restatable}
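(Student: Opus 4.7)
The plan is to use \Cref{thm:partitioning} as a common preprocessing step to obtain an acyclic orientation with controlled out-degree, then turn that orientation into a proper coloring. The starting observation is that every $\beta$-partition $H_1,\ldots,H_L$ of $G$ induces an acyclic orientation of $G$ of maximum out-degree at most $\beta$: orient each edge from its endpoint in the smaller-indexed layer to the one in the larger-indexed layer, breaking ties within a layer by node identifier. Moreover, the subgraph $G[H_i]$ induced by any single layer has maximum degree at most $\beta$. Both properties drive the coloring step.

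For items~1 and~2 I apply \Cref{thm:partitioning} with a ``large'' $\beta$ and then color the oriented graph with $O(\beta^2)$ colors using a Linial / Barenboim--Elkin-style algorithm. For item~1 I set $\beta=\Theta(\alpha^{1+\eps/2})$; \Cref{thm:partitioning} produces the orientation in $O(1/\eps)$ rounds, and the $O(d^2)$-coloring step, which runs in $O(\log^{*} n)$ rounds in the \local model given a max-out-degree-$d$ orientation, collapses to $O(1)$ \ampc rounds via the usual exponential speedup obtained from adaptive queries. The resulting palette has size $O(\alpha^{2+\eps})$, matching item~1. Item~2 follows the same template with $\beta=\Theta(\alpha)$: \Cref{thm:partitioning} now costs $O(\log\alpha)$ rounds and the subsequent $O(d^2)$-coloring adds only a constant number of rounds, producing $O(\alpha^2)$ colors in $O(\log\alpha)$ rounds overall.

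For item~3 I cannot afford $O(\beta^2)$ colors, so I set $\beta=(2+\eps)\alpha$ (obtaining the partition in $O(\log\alpha/\eps)$ rounds via \Cref{thm:partitioning}) and reuse a single shared palette of $(2+\eps)\alpha+1$ colors. I process the layers from top to bottom, executing a degree+1 \emph{list}-coloring subroutine on $G[H_i]$ in each layer. The list-coloring guarantee holds because for any $v\in H_i$ the number of colors forbidden by already-colored neighbors in $H_{i+1}\cup\cdots\cup H_L$ plus the number of $v$'s uncolored same-layer neighbors is at most $\beta$, so $v$'s remaining list of available colors strictly exceeds its remaining uncolored degree. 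Summing the per-layer cost of list-coloring across the layers produces the claimed $\tilde{O}(\alpha/\eps)$ total round count.

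The main obstacle is the per-layer coloring subroutine, especially its list-coloring variant needed for item~3. The existing low-space \ampc machinery (Chang et al.\ PODC'19) gives plain $(\Delta+1)$-coloring in $O(1)$ rounds, and porting it to the list setting with palettes of size $\Theta(\alpha)$ and maximum degree $\beta$ requires extending the conflict-resolution primitive and staying inside the $n^\delta$ query budget per node. A secondary concern is certifying that the Linial/Barenboim--Elkin-style $O(d^2)$-coloring invoked for items~1 and~2 truly runs in a constant number of \ampc rounds once the out-degree-$d$ orientation is in hand, including the bookkeeping needed to route the relevant neighborhoods onto single machines without exceeding local space.
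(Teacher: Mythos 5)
Your plan for items~1 and~2 is essentially the paper's: compute a $\beta$-partition with $\beta=\Theta(\alpha^{1+\eps})$ or $\beta=\Theta(\alpha)$ via \Cref{thm:partitioning}, then run the one-sided Arb-Linial $O(\beta^2)$-coloring on the resulting low-out-degree orientation. The part you flag as a ``secondary concern'' --- whether the Arb-Linial phase really collapses to $O(1)$ \ampc rounds --- is in fact a real gap, not bookkeeping. Collecting a $\beta^{O(\log^* n)}$-sized out-neighborhood per node only fits into local space $n^\delta$ when $\beta$ is small; the paper must split cases. For item~1, when $\alpha^\eps>\log n$ it runs only one Arb-Linial round, obtaining $O(\beta^2\log n)=O(\alpha^{2+3\eps})$ colors and then rescaling $\eps$. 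For item~2, when $\alpha>2^{\log^* n}$ it simulates one round at a time, costing $O(\log^* n)=O(\log\alpha)$ \ampc rounds, which is absorbed by the $O(\log\alpha)$ budget of the partitioning step. Without these case splits your round bounds don't hold for large $\alpha$, and you also leave entirely unaddressed the regime $\alpha>n^{\delta/(1+\eps)}$, where a single node's $\beta$ out-neighbors do not even fit on one machine; the paper handles this with the separate derandomized $2x\Delta$-coloring primitive (\Cref{thm:derandomizedColoring}) applied per layer.

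For item~3 your approach is genuinely different from the paper's, and it does not give the claimed round bound. You propose to process layers top to bottom, list-coloring each layer only after all higher layers are finished. But the $\beta$-partition has $\Theta(\eps^{-1}\log n)$ layers, so even if each layer's list-coloring took a single \ampc round your total would be $\Omega(\eps^{-1}\log n)$ rounds, which exceeds $\tilde O(\alpha/\eps)$ whenever $\alpha=o(\log n/\operatorname{polylog}\alpha)$ --- in particular for constant $\alpha$. The paper avoids the sequential layer dependency: it first computes a $(\beta+1)$-coloring of \emph{every} layer's induced subgraph in parallel (an ``initial'' coloring that is proper inside a layer but may conflict across layers), which orients each cross-layer or same-layer edge consistently by (layer, initial color). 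The final color of $v$ then depends only on nodes reachable along monotone paths in that DAG, whose depth is $O((\beta/\eps)\log n)$; batching $\Theta((\delta/\beta)\log_\beta n)$ layers per \ampc round keeps each node's dependency ball at size $O(n^\delta)$ and yields $O((\beta/\eps\delta)\log\beta)=O(\alpha\log\alpha/\eps)$ batches. The key idea you are missing is doing the per-layer coloring for all layers simultaneously and paying the sequentiality cost only over the DAG depth, not over the number of layers. You also do not need a list-coloring primitive at all under this scheme: after the initial coloring, the conflict-resolution step is a greedy first-fit along the DAG, implementable with standard sorting and broadcast-tree aggregation in $O(1)$ rounds per batch.
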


All of our coloring algorithms heavily exploit that we first compute a small out-degree orientation using \Cref{thm:partitioning} with different parameters.
Although \Cref{thm:coloring} has various trade-offs between the runtime and the number of colors, its main strength lies in the following corollary. 
\begin{corollary}
	For any constant $\eps>0$  there is a constant time \ampc algorithm to compute a $((2+\eps)\alpha + 1)$-coloring on any graph with bounded arboricity $\alpha = O(1)$.
\end{corollary}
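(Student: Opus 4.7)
The plan is to derive the corollary as a direct specialization of item 3 of \Cref{thm:coloring}, which already produces a $((2+\eps)\alpha + 1)$-coloring in $\tilde{O}(\alpha/\eps)$ \ampc rounds on \emph{any} graph of arboricity $\alpha$. So I would simply invoke that theorem and then argue that every factor in the round complexity collapses to a constant under the hypotheses of the corollary.

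Concretely, fix the constant $\eps>0$ given by the corollary, and apply item 3 of \Cref{thm:coloring} with this $\eps$. The resulting algorithm is deterministic, operates in the low-space regime, and produces a valid $((2+\eps)\alpha + 1)$-coloring. Its round complexity is $\tilde{O}(\alpha/\eps)$, where the $\tilde{O}(\cdot)$ hides $O(\log\alpha)$ factors, as noted in the footnote to the theorem statement. Under the hypothesis $\alpha = O(1)$ we have that $\alpha$ is a constant, hence so is $\log\alpha$; combined with the fact that $1/\eps$ is constant (since $\eps$ is a constant chosen up front), the round bound becomes
\[
\tilde{O}(\alpha/\eps) \;=\; O\!\left(\frac{\alpha \log\alpha}{\eps}\right) \;=\; O(1).
\]
This yields the claimed constant-time \ampc algorithm.

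Since the corollary is nothing more than a substitution of constants into the bound of \Cref{thm:coloring}, there is no real obstacle here: all the technical work is already contained in the proof of item 3 of \Cref{thm:coloring} (and, transitively, in \Cref{thm:partitioning} and \Cref{lem:coinDropLCA}). The only thing one has to be careful about is verifying that nothing else hidden in that theorem depends on $\alpha$ or $\eps$ in a non-constant way, in particular that the local-space requirement $O(n^\delta)$ and the total-space requirement $O(n^{1+\delta})$ inherited from \Cref{thm:partitioning} are genuinely independent of $\alpha$ and $\eps$. As stated, both are controlled by the constant $\delta$ and therefore remain in the low-space regime, so the conclusion follows immediately.
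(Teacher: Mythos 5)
Your proposal is correct and matches the paper's (implicit) derivation: the corollary is exactly the specialization of item 3 of \Cref{thm:coloring} to constant $\alpha$ and constant $\eps$, under which $\tilde{O}(\alpha/\eps) = O(\alpha \log\alpha / \eps) = O(1)$, and the space bounds inherited from \Cref{thm:partitioning} depend only on $\delta$. Nothing further is needed.
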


\paragraph{The Quadratic Barrier.}
One of our easiest results of \Cref{thm:coloring}, once we're given the small out-degree orientation, is the one coloring with quadratic number of colors. Here, we essentially simulate the \local model algorithm by Linial that, very efficiently, computes a $\beta^2$-coloring on a graph with an out-degree $\beta$ and only uses local information. The fact that the algorithm works in graphs with a small out-degree (instead of small maximum degree as usually cited) was first observed in \cite{BE10}.
Understanding when one can go below $o(\alpha^2)$ colors with low volume and/or round complexity seems fundamentally hard and is an intriguing open question (see, e.g., ~\cite{barenboimelkin_book, FHK16}).

\paragraph{Deterministic algorithms.}  We emphasize that, in contrast to most prior algorithms in the \ampc model and also the most efficient \lca algorithms for large degree graphs, all our algorithms are deterministic. 
We note that deterministic algorithms seem fundamentally more constrained than randomized ones.
As an example, in the deterministic setting, even on a tree, it seems out of reach to design a sublinear \lca that finds the nearest leaf of each node.
In a closely related \volume model\footnote{A crucial difference between \lca and \volume is that in \volume the queries must form a single connected component.}, it is known that $\Omega(n)$ deterministic queries are required to solve closely related problems on trees~\cite{Rosenbaum2020}.
Furthermore, there is an \emph{exponential} gap between the randomized and the deterministic complexities.

\paragraph{Very High Degrees.}
A particular challenge in massively parallel computation models is to efficiently deal with nodes whose degree is significantly larger than the local space of each machine, especially for deterministic algorithms. As we desire to have efficient algorithms for all ranges of parameters, we design an algorithm using orthogonal techniques. 
We note that the following theorem holds in the strictly weaker \cite{Behnezhad2019-remote} low-space massively parallel computing model (\mpc). 
\begin{restatable}[Deterministic Coloring]{theorem}{derandomizedcoloring}
	\label{thm:derandomizedColoring}
	For any $x > 1$, there is a deterministic low-space \mpc algorithm to compute an $2x\Delta$-coloring in any $n$-node graph with maximum degree $\Delta$ in $O(\log_x n)$ rounds using $O(n+m)\poly\log n$ total space.
\end{restatable}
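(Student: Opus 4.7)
The plan is to apply an iterative random-palette approach and derandomize it via the method of conditional expectations. In each phase, every currently uncolored vertex $v$ samples a color $c_v \in [2x\Delta]$ using a shared short seed; $v$ becomes permanently colored if $c_v \neq c_u$ for every (colored or uncolored) neighbor $u$, and otherwise remains uncolored for the next phase. Using only the uniform marginal on $c_v$, a union bound over $v$'s at most $\Delta$ neighbors gives $\Pr[v \text{ fails}] \leq \Delta/(2x\Delta) = 1/(2x)$. Hence the expected number of vertices uncolored after a phase is at most a $1/(2x)$ fraction of the current uncolored set, and after $O(\log_x n)$ phases the expectation drops below $1$, at which point a valid $2x\Delta$-coloring has been produced.

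To make the algorithm deterministic, the colors in each phase are generated from a pairwise independent hash family with an $O(\log n)$-bit seed. The number of failing vertices is then a sum of local per-vertex indicators, and its conditional expectation under a partial seed decomposes as a sum of per-vertex local quantities that can be aggregated in parallel. I would apply the standard method of conditional expectations, fixing short chunks of the seed one at a time and exhaustively evaluating the conditional expectation for each candidate value of a chunk in parallel across the machines. Using the by-now-standard low-space \mpc derandomization framework (as in, e.g., the lines of work of Censor-Hillel--Parter--Schwartzman and Czumaj--Davies--Parter), this yields a seed satisfying the expected-failure bound in $O(1)$ \mpc rounds per phase and uses $O(n+m)\poly\log n$ total space.

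The step I expect to be the main obstacle is evaluating the per-vertex conditional expectations for vertices whose degree exceeds the local space $O(n^\delta)$, since in that case $v$'s neighborhood cannot be stored on a single machine. I would handle this by building, for each high-degree vertex $v$, a balanced aggregation tree over $v$'s adjacency list distributed across multiple machines: each leaf computes the local contribution for its assigned block of neighbors, and the tree aggregates these contributions in $O(1)$ \mpc rounds. Summed over all vertices, the aggregation trees fit within $O(n+m)\poly\log n$ total space.

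Combining the $O(\log_x n)$ phases, each implemented in $O(1)$ \mpc rounds with the above derandomization, yields the claimed deterministic $2x\Delta$-coloring algorithm in low-space \mpc. Note that the analysis only uses properties of $\Delta$ (not arboricity) and the algorithm is orthogonal to the \ampc techniques used elsewhere in the paper, as required since \Cref{thm:derandomizedColoring} is stated for the strictly weaker low-space \mpc model.
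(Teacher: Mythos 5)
Your proposal follows essentially the same route as the paper: one randomized trial with pairwise-independent colors, derandomized via the method of conditional expectations with $O(\log n)$-bit seeds fixed in $O(\delta\log n)$-sized chunks, parallel aggregation over broadcast trees (also handling high-degree vertices), and $O(\log_x n)$ iterations to finish. The only point worth tightening is that the exact conditional expectation of the per-vertex failure indicator $\Pr[\exists u\in N(v): c_u=c_v \mid \text{prefix}]$ is not locally computable under mere pairwise independence; the quantity you actually want to track is the union-bound surrogate $\sum_{u\in N(v)}\Pr[c_u=c_v\mid \text{prefix}]$ (a pessimistic estimator), which is exactly the monochromatic-edge count the paper works with.
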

In the case of $\alpha>n^{\delta}$, we can still compute the low out-degree partitions as claimed in \Cref{thm:partitioning}. Actually, the partition comes with the property that each layer also has a small maximum degree, and hence, we can apply \Cref{thm:derandomizedColoring} (with $x=\alpha^{O(\eps)} = n^{O(1)}$) to color each layer independently with $O(\alpha^{1+\eps})$ colors in $O(\log_x n) = O(1)$ time, using a separate color space for each layer. This is one of the ways, how we use \Cref{thm:derandomizedColoring} in the proof of \Cref{thm:coloring} in the case of huge arboricity.

\subsection{Related Work}
In the \lca model, there is an algorithm for the classic $(\Delta + 1)$-coloring problem with $\poly \Delta \cdot \log n$ queries~\cite{Chang2019}.
Given that the local space is larger than a small polynomial of $\Delta$, we can turn the \lca into a randomized \ampc algorithm that runs in constant time.
The current \emph{deterministic} state-of-the-art originates from the \local model of distributed message-passing~\cite{linial92}, where the edges of an input graph correspond to communication links along which the nodes communicate in synchronous rounds. 
Using the graph-exponentiation technique, we can simulate a known $\poly \log \log n$ -round algorithm~\cite{Chang2018} in $O(\log \log \log n)$ rounds.
This can be further derandomized to match the same runtime deterministically and similar techniques apply to $(\textrm{degree} + 1)$-coloring as well~\cite{Czumaj2021, Coy2024}.
In \ampc, we can do even better. 
The aforementioned algorithms consist of two parts: Solving the problem on low-degree instances in $O(\log \log \log n)$ time and high-degree instances in $O(\log^* n)$ time.
The former part can easily be simulated in $O(1)$ rounds in the low-space \ampc even though not explicitly stated in the papers.
The runtime then reduces to $O(\log^* n)$.

The coloring problem parameterized by the arboricity $\alpha$ has enjoyed attention in other related models of distributed and parallel computation as well.
In the \local model, Barenboim and Elkin gave deterministic $O(\log n)$ and $O(\alpha \log n)$ round algorithms for computing $O(\alpha^2)$ and $(2 + \eps)\alpha$-colorings, respectively. It is also known that $\Omega(\log n)$ rounds are required~\cite{linial92} for arboricity-dependent colorings. 
On the randomized side, it is known how to obtain an $O(\alpha \log \alpha)$ coloring in $O(\log n)$ rounds and $O(\alpha)$-coloring in $O(\log n \cdot \min\{ \log \alpha, \log \log n \})$ rounds~\cite{Lymouri2017}.
We note that there are also $\poly \log n$ time algorithms for finding $(1 + \eps)\alpha$-out-degree orientations, but these orientations need not be acyclic~\cite{HarrisSVPODC21}.

In the \clique model, we have $n$ machines that communicate in an all-to-all fashion. Each machine corresponds to a node in the input graph and initially knows its incident edges.
The \clique model is very close to the non-adaptive \mpc model with linear memory per machine, i.e., $\Omega(n)$ words per machine. There are deterministic algorithms for $O(\alpha)$-colorings in both of the models that run in $O(1)$ rounds~\cite{Fischer2023}.
Finally, through the rake-and-compress decomposition on trees, one can also obtain an orientation with an out-degree at most $2$.
In a recent work, it was shown how to obtain this decomposition in $O(1)$ rounds of \ampc~\cite{tree-contraction-2022}.
For the special case of forests $(\alpha=1)$, \cite{GrunauDISC2023treecoloring} give a $3$-coloring algorithm that runs in $O(\log \log n)$ rounds and uses optimal global space. This algorithm is conditionally optimal in the \mpc model.

\subsection{Outline}

In \Cref{sec:nutshell}, we start with giving a technical overview of our main contributions and the challenges we need to overcome. In \Cref{sec:preliminaries}, we define the \ampc model and also give several key definitions for structures like partial, induced, and natural $\beta$-partitions. In \Cref{sec:centralizedPartition}, we present the \lca for computing partial $\beta$-partition and define a coin dropping game which is useful in describing the query structure of the \lca, proving \Cref{lem:coinDropLCAformal} (formal version of \Cref{lem:coinDropLCA}). In \Cref{sec:AMPCPartition}, we show how to leverage the \lca to get a $\beta$-partition in the \ampc model, proving \Cref{thm:partitioning}. Finally, in \Cref{sec:coloring}, we show the coloring of sparse graphs, proving \Cref{thm:coloring,thm:derandomizedColoring}.

\section{Sublinear \lca: High Level Challenges and Techniques} \label{sec:nutshell}
We design a deterministic \lca with a sublinear number of queries to find an acyclic orientation with a small out-degree $\beta = \Omega(\alpha)$ for a very large fraction of the nodes in a graph with arboricity $\alpha$. The informal statement is \Cref{lem:coinDropLCA} and the formal one is \Cref{lem:coinDropLCAformal}.

The ultimate goal for our \lca would be to compute a so called \emph{$\beta$-partition}, which is a partition of its vertex set into layers such that every node has at most $\beta$ neighbors in higher (or the same) layers, with as few layers as possible. This partition naturally gives rise to the desired orientation required for efficient graph coloring algorithms by orienting edges from smaller to larger layers while orienting them arbitrarily within the layers. As discussed in the introduction, it seems unlikely that an \lca with sublinear query complexity can compute such a partition. Instead, we design a sublinear \lca that computes the layers of \emph{most} of the nodes in the graph, see \Cref{fig:partial}.

\begin{figure}[ht]
	\centering
	\includegraphics[width=8cm]{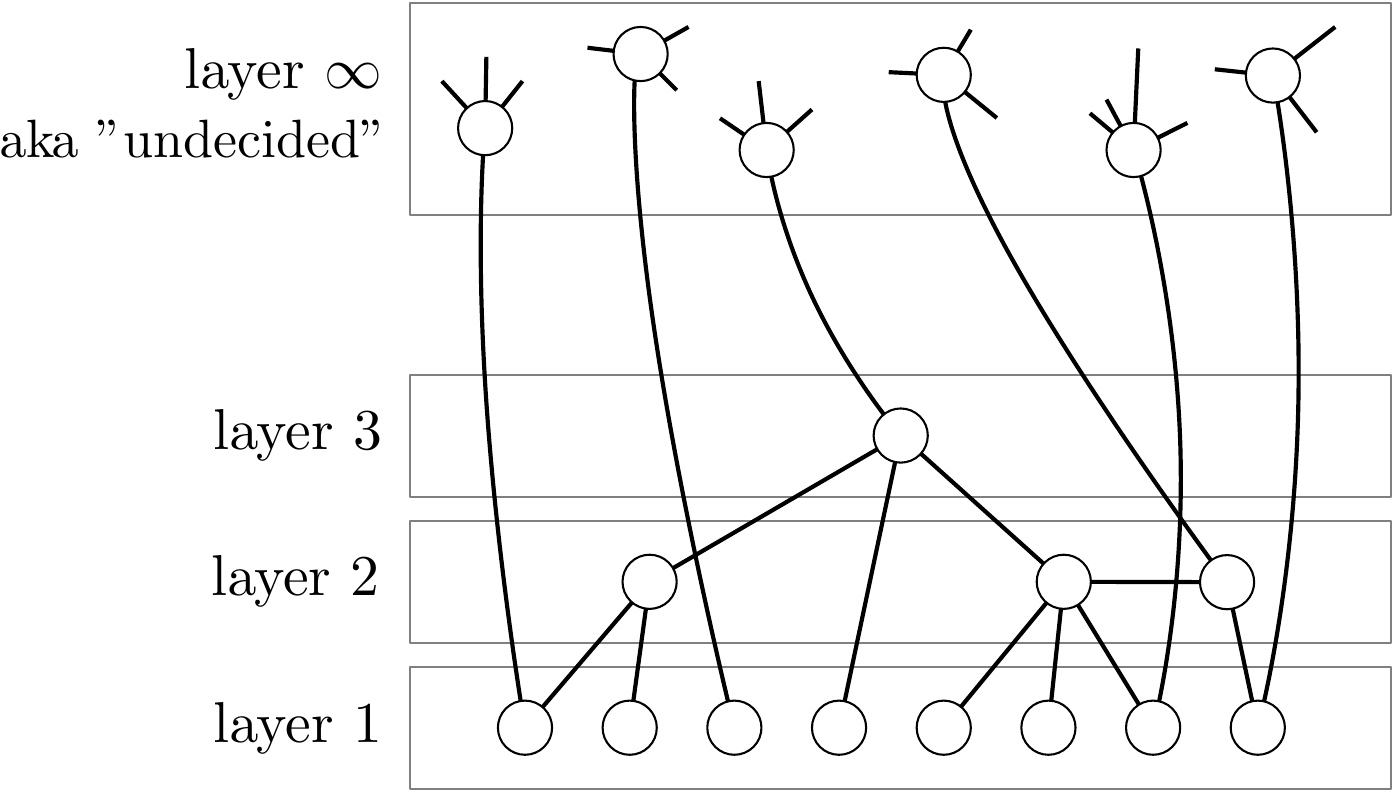}
	\captionof{figure}{An illustration of a $\beta$-partition where most of nodes are assigned a layer.}
	\label{fig:partial}
\end{figure}

\begin{figure}[ht]
	\centering
	\subfloat[The first three layers of a $\beta$-partition. Black nodes form the dependency graph $D(v)$ of node $v$.]{\includegraphics[width=.45\linewidth]{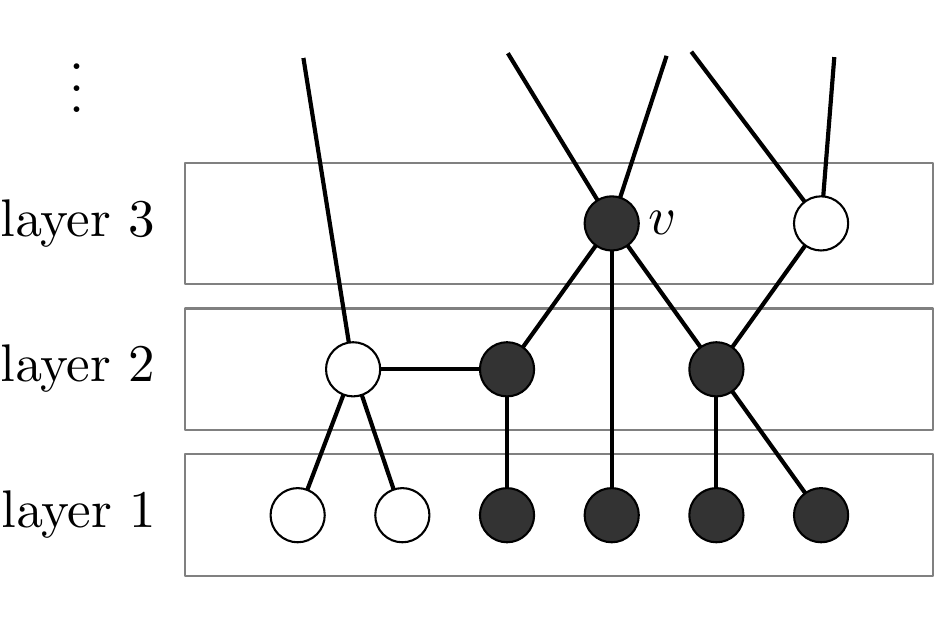} \label{fig:dg}} \hfill
	\subfloat[A skewed dependency graph, which is a counter-example to na\"ively volume-based querying.] {\includegraphics[width=.47\linewidth]{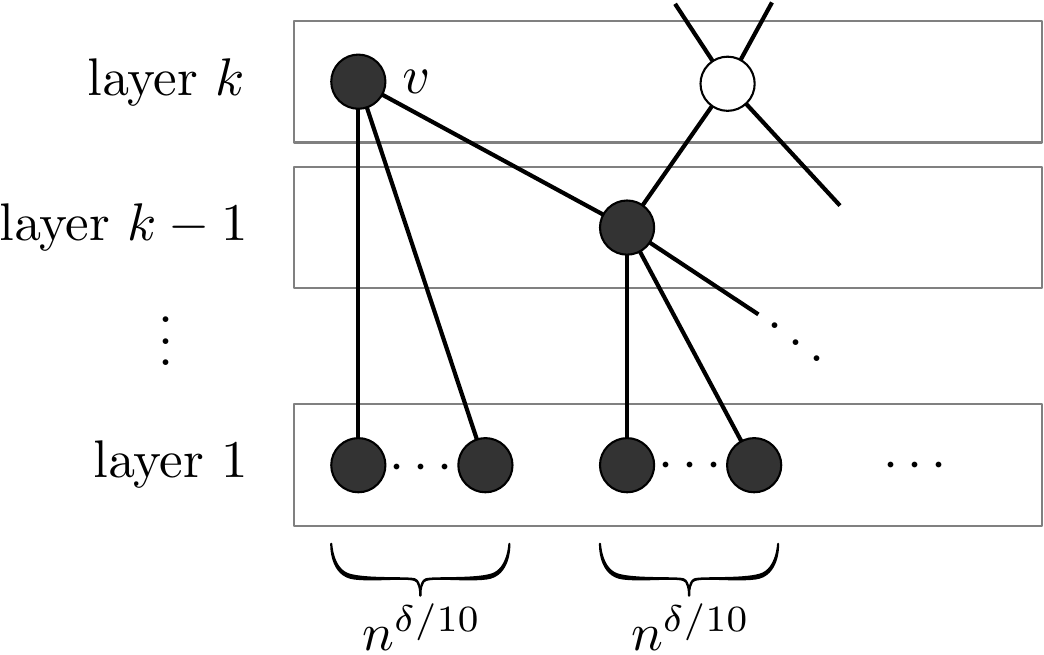} \label{fig:bad}} 
	\caption{Examples of different dependency graphs.}
\end{figure}

\subsection{Where should one query? }
A natural question to ask is what part of the graph is sufficient for a node to learn (or query), in order to be able to compute its layer? 
The question is easy to answer if we already knew the $\beta$-partition $\ell$ that we will compute. This $\beta$-partition implies  a subgraph $D(v):=G[D(v)]$ for each node $v$ that testifies the layer of node $v$ in $\ell$. In other words, the layer of node $v$ depends on $D(v)$, that's why we call it the \emph{dependency graph} of $v$. See \Cref{fig:dg} for an example. \textbf{Once node $v$ has learned $D(v)$ with its queries, it can compute its layer. }
More formally, the dependency graph $D(v)$ of a node $v$ contains all nodes $u$ that are reachable from $v$ via a path of nodes with strictly decreasing layers. Again, see \Cref{fig:dg}. Observe that every node in $D(v)$ has at most $\beta$ neighbors outside of $D(v)$. Of course, the imagined $\beta$-partition $\ell$ is something that we can utilize in the analysis and to build intuition, but we cannot use it to guide the nodes to explore the graph, simply because it is not known beforehand.

Observe that if the dependency graph of a node $v$ is too large, there is no hope for $v$ to learn it in a sublinear number of queries. Hence, our only hope is that at least nodes with small dependency graphs ($\leq n^\delta$) manage to learn their respective dependency graphs and compute their layers. In our analysis, we show that most of the nodes have small dependency graphs. 

Everything boils down to solving an impossible sounding task: \textbf{guide the graph exploration of a node $v$ such that most of what $v$ queries is within its dependency graph, with no prior information on where to search.} Deterministic graph exploration without prior knowledge is hard, especially since the graph may look identical in every direction at a node $v$, but only a small fraction of the directions contain the dependency graph and there is no way to identify these directions that we want to explore. Let us briefly explain why the classic search paradigms fail in this setting. 

\begin{itemize}
	\item \emph{Depth-first-search:} Recall that node $v$ can have up to $\beta$ neighbors outside the dependency graph. If the algorithm happens to initiate the DFS from a neighbor outside of $D(v)$, all of the queries might be used on learning something that is not the dependency graph.
	\item \emph{Breadth-first-search:} In order to avoid the pitfalls of DFS, one could employ a BFS with the hope of balancing the exploration of the graph. Indeed, after learning its $1$-hop surroundings, at least a $1/\beta$ fraction of what is queried is the dependency graph, which sounds promising. Unfortunately, this approach is also flawed. Consider a node $v$ and its neighbor $u$ that is outside $v$'s dependency graph. The degree of $u$ is unbounded, so if for example $\deg(u)=n^{\delta}$, most of the queries in the second level of the BFS will be spent on learning the neighbors of $u$, which are outside $D(v)$, and hence useless.
\end{itemize}

Clearly, a more sophisticated and fine-tune exploration procedure is required. Our actual approach explores the graph in a "volume-balanced" way, instead of "distance-balanced" like BFS and DFS.

\paragraph{Volume-based querying.}
In order to steer the direction of the queries, 
consider the following coin dropping procedure. Give $n^{\delta}$ coins to node $v$, who na\"ively (this approach will still not work) forwards the coins to all of its neighbors, giving each neighbor an equal share of the total number of $v$'s coins. Consider repeating this process recursively for every node holding coins, until coins cannot be further divided, which is in some sense a "volume-based" BFS. The forwarding of coins represents querying for nodes, and every node that has received a coin has been queried by $v$ and hence  \emph{known} to $v$. Optimally, all nodes of $D(v)$ receive a coin.
Unfortunately, this approach fails for a large family of counter-examples: skewed dependency graphs (\Cref{fig:bad}), comprising of a long path where every node has, e.g., $n^{\delta/10}$, neighbors in the lowest layer. In this case, $n^\delta$ coins will be already spent after only 10 steps of distributing coins, effectively learning only roughly $10 \cdot n^{\delta/10}$ dependency graph nodes.

\subsection{Our solution} 

Our solution uses the same coin dropping procedure, but we devise more sophisticated \emph{forwarding rules} that determine how to distribute the coins. These rules utilize the topology of the graph induced by $v$'s knowledge aka the nodes $v$ has discovered so far. The forwarding rules are also updated frequently during the course of the algorithm to steer the future exploration of the graph and making very strong usage of the adaptivity of the queries.

Recall that every node in $D(v)$ has at most $\beta$ neighbors outside of $D(v)$. Hence, when coins are at a node $u \in D(v)$ and $u$ forwards coins to $\beta+1$ neighbors, we ensure that at least one neighbor from $D(v)$ receives coins, or more precisely at least a $1/(\beta+1)$-fraction of the coins goes to neighbors in $D(v)$.
If we keep recursively applying this approach to all nodes that still have coins available, this roughly tells us, that after $k$ steps of forwarding, at least $x/(\beta+1)^k$ coins are at nodes that are contained in the dependency graph. Here $x$ is the number of coins that $v$ started with. However, it is not sufficient to choose $\beta+1$ neighbors arbitrarily, since it may be that we keep forwarding coins to nodes that we already know, i.e., nodes that have seen a coin before. In the end, it may be that all of these coins are just accumulated in a small part of $D(v)$ that we already know, or in other words, the process does not lead $v$ to querying new, undiscovered, parts of the dependency graph.

\paragraph{Our coin forwarding rules.}  Hence, we need to choose \emph{the correct neighbors} to whom we forward the coins such that enough coins "flow" into the part of the dependency graph we have not yet learned, and not too many "leak out" outside of the dependency graph or "get stuck" in a part of the dependency graph that we already know. The rules for choosing the correct neighbors are chosen adaptively and frequently updated. Node $v$ keeps simulating an estimation of the $\beta$-partition (this estimation is called the \emph{induced} $\beta$-partition, formalized in \Cref{def:inducedPartition}) for the whole subgraph it has queried so far, which we call $S_v$ (which is only a small part of the whole graph). In the simulation, every node in $S_v$ is assigned a layer by $v$, or if no layer is assigned, we assume it is $\infty$.  During an iteration, nodes forward coins to the $\beta+1$ neighbors with the largest simulated layer. If more than $\beta+1$ neighbors have an $\infty$ layer, a node can forward the coins to any such $\beta+1$ neighbors. We show that with this forwarding rule, enough coins traverse to parts of $D(v)$ that have not received any coins before, and we keep learning more and more nodes of $D(v)$. 

Surprisingly, we cannot halt the coin dropping game at the first point where node $v$ is able to compute a layer $\neq \infty$ for itself. It is possible, that after querying most (but not all) of its dependency graph, node $v$ computes its estimated layer to be, e.g., $n^{\delta/10}$, while its actual layer, if $v$ would have found the whole dependency graph, would be significantly smaller. This is unacceptable, since we ultimately want the number of layers to be logarithmic in $n$. The final success of our method depends on the frequency of updating the forwarding rules and the duration for which we run the algorithm.

\subsection{Using our \lca recursively}
\label{sec:recursion-details}

Due to the sublinear query complexity, our \lca can only compute the $\beta$-partition layers for a large fraction of nodes, but not all. One would hope to simply recurse on the subgraph remaining after removing all nodes with a $\neq \infty$ layer. However, this step turns out to require more care. Taking a closer look at  \Cref{lem:coinDropLCA}, we only get the guarantee that there exists a subset $S\subseteq V$ containing at least a $1-1/n^{O(\delta)}$ fraction of vertices such that the layering of $S$ forms a $\beta$-partition of $G[S]$. The difficulty of recursion stems from the fact that we cannot identify the nodes belonging to this magical set $S$ and that there may exist nodes outside of set $S$ that have a $\neq \infty$ layer, which are problematic, since including their layering in the partition may result in a \emph{globally inconsistent} partition. Consider nodes $v,u \not\in S$; for example, a node $v$ may compute layer $10$ for itself based on the fact that its neighbor $u$ should be in layer $9$. However, node $u$, for some reason, has computed its layer to be $11$. If we place nodes $v$ and $u$ in their preferred layers, node $v$ might end up with more than $\beta$ neighbors in higher layers. 

We resolve the issue with a slightly stronger \lca than stated in the informal \Cref{lem:coinDropLCA}. In addition to outputting a layer $\ell_u(u)$, each node $u\in V$ outputs a \emph{proof} of its own layer in the form of a $\beta$-partition $\ell_u$ on a small subgraph of $G$. The proof is basically a set of instructions for layering a part of the graph such that $v$ can obtain layer $\ell_u(u)$. 

Having access to this proof actually relieves us from needing to know which nodes having layer $\neq \infty$ belong to $S$ and which do not. We can merge all $\beta$-partitions $\ell_u$ that are given as proofs by nodes $u\in V$ with $\neq \infty$ via a global minimum function
\begin{align*}
	\ell(v)=\min_{u\in V}\ell_u(v)
\end{align*}

and get a globally consistent $\beta$-partition $\ell$ on the subgraph induced by all nodes with layer $\neq \infty$.
This is sufficient for using our querying routines for recursion in \ampc.

\section{Model and Definitons}
\label{sec:preliminaries}
\subsection{The \ampc model}

In the \ampc model, we have an input of size $N$, which is processed by $P$ machines, each with space $S$. The unit of measure is a word of size $O(\log n)$ bits. The total space of all machines is $T = S \cdot P$ and we assume that $S = \Theta(N^{1-\Omega(1)})$.
In addition, there is a collection of distributed data stores (DDS) that we denote by $D_0,D_1,\dots$. We assume that all DDS provide a key-value store semantics: they store a collection of key-value pairs, such that given a key, a DDS returns the corresponding value. We require that both key and value consist of a constant number of words. The input data is stored in $D_0$ and uses a set of keys known to all machines (e.g. consecutive integers). The solution to a problem is stored in the last data store.

If there are $k > 1$ key-value pairs stored in the DDS with the same key $x$, the individual values can be accessed by querying for keys $(x,1),\dots,(x,k)$. Note that the indices from $1$ to $k$ are assigned arbitrarily. A query for a key that does not occur in the DDS, results in an empty response. The computation proceeds in rounds. In the $i$th round, each machine can read data from $D_{i-1}$ and write to $D_i$. Within a round, each machine can perform $O(S)$ reads (henceforth called queries) and $O(S)$ writes, and carry out arbitrary computation. Each query refers to querying for a single key-value pair from $D_{i-1}$ and each write refers to writing a single key-value pair to $D_i$. The amount of communication that a machine performs per round is equal to the total number of queries and writes. The crucial property of the model is that the queries that a machine makes within a round may depend on each other. In particular, say $g$ is a function from $X$ to $X$ and for each $x \in X$, $D_{i-1}$ stores a key-value pair $(x,g(x))$. Then, in round $i$ a machine can compute $g^k(y)$ if $k = O(S)$.

The realism of the \ampc model is discussed in more depth in \cite{Behnezhad2019-remote}. The algorithms presented in this paper take a graph with $n$ nodes and $m$ edges as an input. In this case, the input size is $N=n+m$. We also assume that $S = n^\delta$, where $\delta \in (0,1)$ is some constant. 

\subsection{Arboricity}

\begin{definition}[Arboricity]\label{def:arboricity}
	The arboricity $\alpha(G)$ of a graph $G= (V, E)$ is defined as
	\[
	\alpha(G) = \max_{\substack{G' \subseteq G, \\ |V(G')| \geq 2}} \left\{\left\lceil \frac{|E(G')|}{|V(G')|-1} \right\rceil \right\}.
	\]
\end{definition}

An equivalent definition for $\alpha(G)$ is the minimum number of edge-disjoint forests into which the edge set of $G$ can be partitioned~\cite{NashWilliams1964}. In this work, we will consider graphs with arboricity bounded by parameter $\alpha$ that is $\alpha(G) \le \alpha$ for some parameter $\alpha$. To simplify notation we will assume that the arboricity of the graph is $\alpha$. We now state some simple facts about the arboricity that immediately follow from \Cref{def:arboricity} and were also observed in \cite{BE10}.

\begin{fact}[\cite{BE10}] \label{fact:ArbDec}
	For any subgraph $G'$ of $G$, the arboricity of $G'$ is at most the arboricity of $G$.
\end{fact}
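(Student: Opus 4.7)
The plan is to argue directly from \Cref{def:arboricity}, which expresses $\alpha(G)$ as a maximum over all subgraphs with at least two vertices. The key observation is that the family of subgraphs over which $\alpha(G')$ is maximized is a subfamily of the one used to define $\alpha(G)$, so the maximum can only be smaller.

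More concretely, let $G' \subseteq G$, and suppose first $|V(G')|\geq 2$. Any subgraph $G''\subseteq G'$ with $|V(G'')|\geq 2$ is also a subgraph of $G$, so the quantity $\lceil |E(G'')|/(|V(G'')|-1)\rceil$ appearing in the maximum that defines $\alpha(G')$ already appears in the maximum that defines $\alpha(G)$. Taking the max over the smaller family yields $\alpha(G')\leq \alpha(G)$. The corner case of $|V(G')|<2$ can be handled by declaring $\alpha(G')=0$, which is trivially at most $\alpha(G)$.

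There is no real obstacle here; the statement is a one-line consequence of the monotonicity of the maximum when restricted to a sub-family. If one prefers to argue from the equivalent Nash--Williams characterization as stated after the definition, the proof is equally direct: given any partition of $E(G)$ into $\alpha(G)$ edge-disjoint forests $F_1,\dots,F_{\alpha(G)}$, the edge sets $F_i\cap E(G')$ form at most $\alpha(G)$ edge-disjoint forests (a subgraph of a forest is a forest) that together cover $E(G')$, so $\alpha(G')\leq \alpha(G)$.
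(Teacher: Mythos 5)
Your proposal is correct and takes the approach the paper implicitly endorses: the paper states this fact without proof, noting only that it "immediately follows from Definition 3.1," and your first argument---that the maximum defining $\alpha(G')$ ranges over a subfamily of the subgraphs defining $\alpha(G)$---is precisely that one-liner. The alternative Nash--Williams argument you add is also correct, and is the standard textbook route, but neither is different in spirit from what the paper intends.
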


\begin{fact}[\cite{BE10}] \label{fact:manySmallDegreeNodes}
	For any $\eps > 0$, a graph $G=(V,E)$ with arboricity $\alpha$ has at least $\eps|V|/(2+\eps)$ vertices with degree at most $(2+\eps)\cdot \alpha$. 
\end{fact}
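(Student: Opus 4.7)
The plan is a short averaging argument built directly on the definition of arboricity (\Cref{def:arboricity}). First I would use the definition applied to the whole graph $G$ to get the edge bound $|E| \le \alpha(|V|-1) \le \alpha|V|$, and hence $\sum_{v\in V}\deg(v) = 2|E| \le 2\alpha|V|$.

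Next, I would argue by contradiction. Suppose fewer than $\eps|V|/(2+\eps)$ vertices have degree at most $(2+\eps)\alpha$. Then strictly more than $|V| - \eps|V|/(2+\eps) = 2|V|/(2+\eps)$ vertices have degree strictly greater than $(2+\eps)\alpha$. Summing only the contributions of these high-degree vertices already yields
\[
\sum_{v\in V}\deg(v) \;>\; \frac{2|V|}{2+\eps}\cdot (2+\eps)\alpha \;=\; 2\alpha|V|,
\]
contradicting the bound from the first step. Therefore at least $\eps|V|/(2+\eps)$ vertices must have degree at most $(2+\eps)\alpha$, proving the fact.

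There is essentially no obstacle here: the only subtlety is the off-by-one in the arboricity definition (the $-1$ in the denominator), which is harmless since if $|V|\le 1$ the statement is vacuous, and otherwise $|E|\le\alpha(|V|-1)<\alpha|V|$ suffices. The argument is really just Markov's inequality applied to the degree sequence, using the edge bound that arboricity gives.
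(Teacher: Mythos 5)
Your proof is correct, and it is essentially the same argument the paper uses to prove the generalization \Cref{lem:generalManySmallDegreeNodes}: a Markov-style contradiction combining the degree sum $\sum_v \deg(v) = 2|E|$ with the arboricity bound $|E| \le \alpha(|V|-1)$ from \Cref{def:arboricity}. (The paper itself cites \Cref{fact:manySmallDegreeNodes} from \cite{BE10} without proof, but your argument is exactly the specialization of the paper's proof of \Cref{lem:generalManySmallDegreeNodes} to $\beta = (2+\eps)\alpha$.)
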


An equivalent formulation of \Cref{fact:manySmallDegreeNodes} is that, for any $\eps > 0$, a graph with arboricity $\alpha$ has less than $2|V|/(2+\eps)$ vertices with degree greater than $(2+\eps)\cdot \alpha$. We generalize this statement by parameterizing the degree bound as follows.

\begin{lemma}[Generalization of \Cref{fact:manySmallDegreeNodes}]  \label{lem:generalManySmallDegreeNodes}
	For any $\beta \geq 1$, a graph $G=(V,E)$ with arboricity $\alpha$ has less than $2\alpha|V|/\beta$ vertices with degree greater than $\beta$. 
\end{lemma}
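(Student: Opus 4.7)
The plan is to use the same global-counting idea that underlies \Cref{fact:manySmallDegreeNodes}, just stated for a general degree threshold $\beta$ rather than $(2+\eps)\alpha$. The starting point is the handshake-style bound obtained directly from \Cref{def:arboricity}: taking $G'=G$ (assuming $|V|\geq 2$, the only non-trivial case) we get
\[
\alpha \;\geq\; \left\lceil \frac{|E|}{|V|-1}\right\rceil \;\geq\; \frac{|E|}{|V|-1},
\]
so $2|E| \leq 2\alpha(|V|-1) < 2\alpha|V|$.

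Let $H\subseteq V$ be the set of vertices with degree strictly greater than $\beta$. Since each $v\in H$ satisfies $\deg(v)>\beta$, we have $\sum_{v\in H}\deg(v) > \beta|H|$. On the other hand, trivially $\sum_{v\in H}\deg(v) \leq \sum_{v\in V}\deg(v) = 2|E|$. Combining this with the arboricity bound from the previous paragraph gives
\[
\beta|H| \;<\; \sum_{v\in H}\deg(v) \;\leq\; 2|E| \;<\; 2\alpha|V|,
\]
which rearranges to $|H| < 2\alpha|V|/\beta$, as required. The edge cases $|V|\leq 1$ (where both sides are trivially zero, and no vertex can have positive degree) and $\alpha=0$ (the edgeless graph) are handled directly from the definitions.

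There is no real obstacle here; the only subtlety is the strict versus non-strict inequalities, which must line up so that the final bound is strict ($<$ rather than $\leq$). The strictness comes for free from the hypothesis that each vertex of $H$ has degree strictly greater than $\beta$, so $\sum_{v\in H}\deg(v)\geq \beta|H|+|H|>\beta|H|$ whenever $H$ is nonempty; if $H$ is empty the bound is vacuous. This proof is essentially the same as the one implicitly used for \Cref{fact:manySmallDegreeNodes}: setting $\beta=(2+\eps)\alpha$ recovers the $2|V|/(2+\eps)$ bound on the number of high-degree vertices exactly.
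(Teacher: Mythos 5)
Your proof is correct and takes essentially the same route as the paper: bound $2|E|$ via the handshake lemma, use $\alpha \geq |E|/(|V|-1)$ from \Cref{def:arboricity}, and note that each high-degree vertex contributes strictly more than $\beta$ to the degree sum. The only difference is presentational — you argue directly while the paper argues by contradiction — and both treatments implicitly assume $\alpha \geq 1$ (equivalently $|E|>0$) for the strict inequalities to go through, which is fine in context. One small imprecision: your intermediate step $\sum_{v\in H}\deg(v)\geq \beta|H|+|H|$ relies on $\beta$ being an integer (for non-integer $\beta$, $\deg(v)>\beta$ only guarantees $\deg(v)\geq\lceil\beta\rceil$), but the conclusion $\sum_{v\in H}\deg(v)>\beta|H|$ that you actually need holds regardless whenever $H\neq\emptyset$, so nothing breaks.
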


\begin{proof}
	Suppose for a contradiction that there are at least $2\alpha|V|/\beta$ vertices with degree greater than $\beta$. It follows that,
	\begin{align*}
		2|E| = \sum_{v \in V} \deg(v) > \beta \cdot \frac{2\alpha \cdot |V|}{\beta} = 2 \alpha \cdot |V| \stackrel{*}{\geq} 2 \cdot \frac{|E|}{|V|-1} \cdot |V| > 2 |E|~,
	\end{align*}
	which is a contradiction. The inequality $*$ follows from \Cref{def:arboricity}, as $|E|/(|V|-1)$ must be a lower bound for arboricity.
\end{proof}

\subsection{$\beta$-Partition}
We begin with generalizing the graph-theoretic structure known as \emph{$H$-partition}, which were introduced in the context of distributed message passing models by Barenboim and Elkin \cite{BE10}. Informally, they define an $H$-partition as a partition of a graph with arboricity $\alpha$ into $O(\log n)$ layers such that every node has at most $(2+\eps)\cdot\alpha$ neighbors in higher or equal layers, where $n$ is the number of nodes in the graph.

For our purposes, we define a \emph{$\beta$-partition}, which is a partition of a graph into layers such that every node has at most $\beta$ neighbors in higher or equal layers, for any positive value $\beta$. Our definition also permits an infinity layer: nodes in the infinity layer can have an unbounded number of neighbors in the infinity layer. If the infinity layer is non-empty, we call the partition a \emph{partial $\beta$-partition}. Otherwise, we simply call it a $\beta$-partition.

\begin{definition}[(partial) $\beta$-partition] \label{def:hPartition}
	Consider a graph $G=(V,E)$ and a function $\lambda \colon V(G) \arr \mathbb{N} \cup \{\infty\}$. For $i \in \mathbb{N} \cup \{\infty\}$ define  $V_i=\{v\in V(G) \mid \lambda(v)=i\}$ and $V_{\geq i} = \bigcup_{j\geq i} V_j$.
	
	We say that $\lambda$ is a \emph{$\beta$-partition} if $\deg_{V_{\geq i}}(v) \leq \beta$ for every node $v \in G$ with $\lambda(v) \neq \infty$ and $|V_{\infty}|=0$. If $|V_{\infty}| > 0$, we call $\lambda$ a \emph{partial $\beta$-partition}. 
	We refer to $V_i$ as \emph{layer} $i$ of the $\beta$-partition and $\lambda(v)$ as the \emph{layer} of a node $v$. The \emph{size} of a (partial) $\beta$-partition is the number of unique non-infinity layers.	
\end{definition}

Observe that if $\beta=n$ on a $n$-node graph, the $\beta$-partition would just contain one layer -- the input graph. Hence, $\beta$-partitions are useful only for certain ranges of $\beta$. For example, when $\beta=(2+\eps)\alpha$ where $\alpha$ is the arboricity, we get the well known $H$-partition by Barenboim and Elkin \cite{BE10}.

We extend the definition of $\beta$-partitions further to node-induced subgraphs, where it is a priori unclear how to deal with neighbors not in the subgraph.

\subsection{Induced $\beta$-partition}

Barenboim and Elkin \cite{BE10} define an algorithm where we iteratively put all nodes with degree at most $(2+\eps)\alpha$ in the top-most layer and delete them from the graph. This algorithm has the property that each node will have at most $(2+\eps)\alpha$ neighbors in the same layer or above, and therefore, the output of this algorithm is a $(2+\eps)\alpha$-partition. Using \Cref{fact:ArbDec,fact:manySmallDegreeNodes}, it is easy to see that size of the partition produced by this algorithm is at most $O(\log_{(2+\eps)/2} n)$.
In \Cref{def:inducedPartition}, we generalize the algorithm of \cite{BE10} in two ways: first, by making the degree bound $\beta$ and second, by running it only on a subset $S$ of the nodes.

In the following, we emphasize that all notions refer to the original graph $G$ if not specified otherwise, even when we consider an (induced) subgraph of $G$. In particular, $N(u)$ and $\deg(u)$ always refer to $N_G(u)$ and $\deg_G(u)$, respectively. The following definitions do not assume any particular model of computation.

\begin{definition}[$S$-induced $\beta$-partition] \label{def:inducedPartition}
	Consider a graph $G=(V,E)$, a subset $S \subseteq V$ and any $\beta\geq1$. Consider the partial $\beta$-partition $\sigma_{S,\beta} \colon V(G) \arr \mathbb{N} \cup \{\infty\}$ constructed as follows.
	\begin{enumerate}
		\item Set $\sigma_{S,\beta}(v) \larr \infty$ for every $v \in V$.
		\item For $i=0,\dots,|S|$:
		
		Set $\sigma_{S,\beta}(v) \larr i$ for all nodes $v \in S$ with at most $\beta$ nodes $u \in N(v)$ for which $\sigma_{S,\beta}(u) = \infty$.
	\end{enumerate}
	We refer to $\sigma_{S,\beta}(v)$ as the \emph{layer} of a node $v$.
\end{definition}

We often write $\sigma$ or $\sigma_{S}$ instead of $\sigma_{S,\beta}$ when subset $S$ and/or parameter $\beta$ are clear from context. The following lemma proves many useful properties about function $\sigma_{S,\beta}$ produced by \Cref{def:inducedPartition}. In particular, part \textit{ii.} proves that $\sigma_{S,\beta}$ is indeed a partial $\beta$-partition.

\begin{lemma} \label{lem:inducedPartitionProp}
	Consider a graph $G=(V,E)$, a subset $S \subseteq V$ and any $\beta\geq1$. The partial $\beta$-partition $\sigma_{S,\beta}$ from \Cref{def:inducedPartition} satisfies the following properties for a node $v \in S$.
	\begin{itemize}
		\item[i.] If $\sigma_{S,\beta}(v) = \infty$, then $\sigma_{S,\beta}(w) = \infty$ for at least $\beta+1$ nodes $w \in N(v)$
		\item[ii.] If $\sigma_{S,\beta}(v) \neq \infty$, then $\sigma_{S,\beta}(w)\geq\sigma_{S,\beta}(v)$ for at most $\beta$ nodes $w \in N(v)$
		\item[iii.] If $\sigma_{S,\beta}(v)\neq \infty$ and $\deg(v) \geq \beta+1$, then  $\sigma_{S,\beta}(w) \geq \sigma_{S,\beta}(v)-1$ for at least $\beta+1$ nodes $w \in N(v)$
	\end{itemize}
\end{lemma}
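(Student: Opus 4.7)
The plan is to read all three items directly off the iterative construction in \Cref{def:inducedPartition}, after recording two easy structural facts. First, \emph{monotonicity}: once $\sigma_{S,\beta}(u)$ is set to a finite value in some iteration, it is never overwritten, because the loop only updates nodes currently labeled $\infty$. Second, \emph{stabilization}: in any iteration that actually changes the state at least one node of $S$ must receive a finite label, so after at most $|S|$ such iterations the labeling of $S$ is frozen; since the loop runs for $|S|+1$ iterations, the final state is a true fixed point, meaning every $v\in S$ with $\sigma_{S,\beta}(v)=\infty$ still fails the ``at most $\beta$ neighbors with label $\infty$'' test in the last iteration. This immediately yields (i): strictly more than $\beta$ of $v$'s neighbors $w$ must satisfy $\sigma_{S,\beta}(w)=\infty$.

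For part (ii), let $i=\sigma_{S,\beta}(v)$. Since $v$ was labeled in iteration $i$, at the start of that iteration at most $\beta$ of its neighbors had label $\infty$, and by monotonicity every other neighbor already held a finite label strictly less than $i$ which it still carries. Hence any neighbor with final value $\geq i$ must have been one of those $\leq\beta$ neighbors that were still $\infty$ when iteration $i$ began. For part (iii) I would first use the degree hypothesis to rule out $i=\sigma_{S,\beta}(v)=0$: at the start of iteration $0$ every neighbor is labeled $\infty$, so $v$ could be selected only if $\deg(v)\leq \beta$, contradicting $\deg(v)\geq \beta+1$; hence $i\geq 1$. Because $v$ was \emph{not} selected in iteration $i-1$, strictly more than $\beta$ of its neighbors $w$ had $\sigma_{S,\beta}(w)=\infty$ at the start of iteration $i-1$. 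Each such neighbor either keeps label $\infty$ forever or receives its first finite label in some iteration $\geq i-1$, so its final value is at least $i-1=\sigma_{S,\beta}(v)-1$, producing the required $\beta+1$ neighbors.

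The argument is pure index-chasing, so I do not expect a real obstacle; the one subtle point is the boundary $i=0$ in (iii), which is precisely what the degree assumption is there to exclude. It is worth noting that neighbors of $v$ lying outside $S$ simply retain $\sigma_{S,\beta}=\infty$ throughout the procedure, but this is harmless — they automatically contribute to the ``$\sigma_{S,\beta}(w)=\infty$'' tally in (i), to the ``$\sigma_{S,\beta}(w)\geq \sigma_{S,\beta}(v)$'' tally in (ii), and to the ``$\sigma_{S,\beta}(w)\geq \sigma_{S,\beta}(v)-1$'' tally in (iii).
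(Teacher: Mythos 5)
Your proof is correct and takes essentially the same approach as the paper's: both unroll the iterative construction, use the fact that the process reaches a fixed point within $|S|$ iterations for part (i), and compare the set of $\infty$-labeled neighbors at the start of iterations $i$ and $i-1$ for parts (ii) and (iii). Your up-front statement of monotonicity and stabilization is just a cleaner packaging of the same index-chasing the paper carries out inline.
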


\begin{proof} In the following, an iteration refers to one round of Step 2 in \Cref{def:inducedPartition}. Let $\sigma_S^k(v)$ denote value $\sigma_S(v)$ at the beginning of iteration $k$. For readability, we write $\sigma$ instead of $\sigma_{S,\beta}$.
	
	\textit{i.} Observe that if there is an iteration where no node is assigned a $\sigma$ value, none of the following iterations assign $\sigma$ values. Hence, if value $|S|$ is assigned, $\sigma(v) \neq \infty$ holds for all nodes $v \in S$. Towards a contradiction, consider an iteration $i<|S|$ after which $\sigma(v) = \infty$ and there are at most $\beta$ nodes $w \in N(v)$ with $\sigma(w)=\infty$. In iteration $i+1$, it must be that $\sigma(v) \larr i+1 \neq \infty$, a contradiction.
	
	\textit{ii.} Consider $\sigma(v) = \sigma^i(v) = i \neq \infty$. By definition, it holds that $\sigma^i(w)=\infty$ for at most $\beta$ neighbors $w \in N(v)$. Hence, for said neighbors it will hold that $\sigma(w)=\infty$ or $\sigma(w) \in [i,|S|]$. In either case, $\sigma(w) \geq \sigma(v)$.
	
	\textit{iii.} Consider $\sigma(v) = \sigma^i(v) = i \neq \infty$. Since $\deg(u) \geq \beta+1$, it must hold that $i>0$. Define $I_{i}$ to be the set of nodes $w \in N(v)$ for which $\sigma^{i}(w)=\infty$. Define $I_{i-1}$ similarly and observe that $I_{i} \subsetneq I_{i-1}$: since $\sigma(v) = i$, it must be that $|I_{i-1}|\geq \beta+1$ and $|I_{i}| \leq \beta$. Hence, for nodes $w \in I_{i-1} \setminus I_{i}$ it holds that $\sigma(w)=i-1$. Since $\sigma^i(w)=\infty$ for nodes $w \in I_i$, it must be that $\sigma(w)\geq i$. Hence, for $|I_{k-1}|\geq \beta+1$ nodes $w \in N(v)$ it holds that $\sigma(w) \geq i-1 = \sigma(v)-1$.
\end{proof}

The following lemma reveals the so called non-increasing nature of the $S$-induced $\beta$-partition. Very informally, the intuition for the lemma is that the larger the subgraph used to construct the $S$-induced $\beta$-partition, the smaller the layer assignment for every node in the graph. In other words, more information gives a better partition.

\begin{lemma} \label{lem:sigmaDec}
	Consider a graph $G=(V,E)$, subsets $S \subseteq T \subseteq V$ and any $\beta\geq1$. Then $\sigma_{S,\beta}(v)\geq \sigma_{T,\beta}(v)$ for every node $v \in V$.
\end{lemma}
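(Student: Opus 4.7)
The plan is to prove the containment by a straightforward induction on the iteration counter of Step~2 in \Cref{def:inducedPartition}, using the fact that the criterion for a node being assigned a finite layer (``has at most $\beta$ neighbors still labeled $\infty$'') becomes \emph{easier} to satisfy when more nodes have already been peeled off. Since $S \subseteq T$, the $T$-process simulates a ``more aggressive'' peeling, so every node that the $S$-process manages to assign by the start of iteration $i$ should also be assigned by the $T$-process by that time.

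First I would dispose of the trivial case: if $v \notin S$, then $\sigma_{S,\beta}(v) = \infty$ by the initialization in Step~1 (Step~2 only ever updates nodes in $S$), so $\sigma_{S,\beta}(v) = \infty \geq \sigma_{T,\beta}(v)$ holds vacuously. So I may assume $v \in S \subseteq T$ from now on.

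The main step is the following induction. For $k \geq 0$, let $A_S^k = \{ u \in V : \sigma_{S,\beta}(u) < k \}$ and define $A_T^k$ analogously (so $A_S^0 = A_T^0 = \emptyset$). I would prove by induction on $k$ that $A_S^k \subseteq A_T^k$. The base case is immediate. For the inductive step, suppose $A_S^k \subseteq A_T^k$ and take any $v \in A_S^{k+1}$. If $v \in A_S^k$, then $v \in A_T^k \subseteq A_T^{k+1}$ by the inductive hypothesis. Otherwise $v$ is newly assigned in iteration $k$ of the $S$-process, which by \Cref{def:inducedPartition} means $v \in S$ and $|N(v) \setminus A_S^k| \leq \beta$ (the ``$\sigma_{S,\beta}(u)=\infty$'' condition is exactly ``$u \notin A_S^k$'' at the start of iteration $k$). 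Since $A_S^k \subseteq A_T^k$, we have $N(v) \setminus A_T^k \subseteq N(v) \setminus A_S^k$, so $|N(v) \setminus A_T^k| \leq \beta$ as well. Combined with $v \in S \subseteq T$, this shows that either $v$ was already in $A_T^k$ or $v$ meets the criterion to be assigned in iteration $k$ of the $T$-process; either way $v \in A_T^{k+1}$.

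From the invariant $A_S^k \subseteq A_T^k$ for every $k$, the lemma follows: for any $v \in V$, if $\sigma_{S,\beta}(v)$ is finite and equals $j$, then $v \in A_S^{j+1} \subseteq A_T^{j+1}$, so $\sigma_{T,\beta}(v) \leq j = \sigma_{S,\beta}(v)$; if $\sigma_{S,\beta}(v) = \infty$, the inequality is trivial. The only mild wrinkle is that the $S$-process formally runs for $|S|$ iterations and the $T$-process for $|T|$ iterations, but I would note that both processes stabilize as soon as an iteration produces no new assignments (as already used in the proof of \Cref{lem:inducedPartitionProp}\,\textit{i.}), so extending the definition of $A_S^k$ beyond $k = |S|$ by $A_S^k := A_S^{|S|}$ causes no issue. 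I don't anticipate a real obstacle here; the argument is a clean monotonicity induction, and the only thing to be careful about is matching the ``at most $\beta$ $\infty$-neighbors'' condition to the set-theoretic statement about $A_S^k$ and $A_T^k$.
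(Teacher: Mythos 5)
Your proof is correct and takes essentially the same approach as the paper's: a monotonicity argument over the iterations of the peeling process in \Cref{def:inducedPartition}, showing that any node peeled by the $S$-process by the start of iteration $k$ is also peeled by the $T$-process by then. The paper phrases this as a proof by contradiction (identifying the first iteration where $\sigma_{S}(v) < \sigma_{T}(v)$ would occur), whereas you run it as a direct forward induction on the invariant $A_S^k \subseteq A_T^k$, but the underlying argument is the same.
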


\begin{proof}
	For readability, we write $\sigma_{S}$ and $\sigma_{T}$ instead of $\sigma_{S,\beta}$ and  $\sigma_{T,\beta}$, respectively. If $S=T$, then $\sigma_{S}$ equals $\sigma_{T}$ by definition. Hence, consider only $S \subset T$. The claim holds for all nodes $v$ for which $\sigma_S(v)=\infty$. Since $\sigma_S(v) = \infty$ holds for all nodes $v \in V \setminus S$ by definition, we only need to prove the claim for nodes $v \in S$ for which $\sigma_S(v)\neq\infty$. 
	
	Consider the construction of the induced partitions $\sigma_S$ and $\sigma_T$ from \Cref{def:inducedPartition} side-by-side. Towards a contradiction, let $i$ be the first iteration (side-by-side) after which $\sigma_S(v)<\sigma_T(v)$ holds for a node $v \in S$. It must be that $\sigma_S(v) = i \neq \infty$ and $\sigma_T(v) = \infty$, which implies that $\sigma_S(w)=\infty$ for at most $\beta$ nodes $w \in N(v)$ and that $\sigma_T(v)=\infty$ for at least $\beta+1$ nodes $w \in N(v)$. Hence there must have existed one node $x$ for which $\sigma_S(x)<\sigma_T(x)$ after iteration $i-1$, a contradiction.
\end{proof}

The analysis of our algorithm will crucially rely on the properties of a so called \emph{dependency graph} of a node, which is defined through a (partial) $\beta$-partition.

\begin{definition}[Dependency graph] \label{def:dependencyGraph}
	Consider a graph $G=(V,E)$, a subset $S \subseteq V$ and any $\beta\geq1$. The dependency graph $G[D(\sigma_{S,\beta},v)]$ of a node $v\in V$ is defined as the output of the following recursive procedure. 
	\begin{itemize}
		\item If $\sigma_{S,\beta}(v)=\infty$, then $D(\sigma_{S,\beta},v) = \emptyset$ 
		\item If $\sigma_{S,\beta}(v)=0$, then $D(\sigma_{S,\beta},v) = \{v\}$
		\item otherwise $D(\sigma_{S,\beta},v) =  \left\{ \bigcup_{w \in N^{<}(v)} D(\sigma_{S,\beta},w) \right\} \cup \{v\}$ where $N^{<}(v)$ is the set of nodes $w \in N(v)$ such that $\sigma_{S,\beta}(w)<\sigma_{S,\beta}(v)$.
	\end{itemize}
\end{definition}

\begin{observation}[Nested property of dependency graphs] \label{obs:nestedDepG}
	Consider a graph $G=(V,E)$, a subset $S \subseteq V$ and any $\beta\geq1$. For all $w \in D(\sigma_{S,\beta},v)$ for a node $v \in S$ it holds that $D(\sigma_{S,\beta},w) \subseteq D(\sigma_{S,\beta},v)$.
\end{observation}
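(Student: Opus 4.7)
The plan is to prove the nested property by induction on the layer value $\sigma_{S,\beta}(v)$, exploiting the recursive structure of \Cref{def:dependencyGraph} directly. Since $v \in S$ is given but $v$ may still have $\sigma_{S,\beta}(v)=\infty$, the argument naturally splits into three cases mirroring the three cases of the definition.

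First I would dispatch the two trivial cases. If $\sigma_{S,\beta}(v)=\infty$, then $D(\sigma_{S,\beta},v)=\emptyset$ by definition, so the universally quantified statement over $w\in D(\sigma_{S,\beta},v)$ is vacuous. If $\sigma_{S,\beta}(v)=0$, then $D(\sigma_{S,\beta},v)=\{v\}$, so the only choice is $w=v$, and $D(\sigma_{S,\beta},w)\subseteq D(\sigma_{S,\beta},v)$ is immediate. These form the base of the induction.

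For the inductive step, assume $\sigma_{S,\beta}(v)=k\geq 1$ and that the claim holds for every node with layer strictly less than $k$ (note that any node with finite layer lies in $S$ by \Cref{def:inducedPartition}, so the induction hypothesis applies to such nodes). Let $w\in D(\sigma_{S,\beta},v)$. By the third bullet of \Cref{def:dependencyGraph}, either $w=v$, in which case the conclusion is trivial, or $w\in D(\sigma_{S,\beta},u)$ for some $u\in N^{<}(v)$. Since $\sigma_{S,\beta}(u)<\sigma_{S,\beta}(v)=k$, the induction hypothesis yields $D(\sigma_{S,\beta},w)\subseteq D(\sigma_{S,\beta},u)$. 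Finally, the recursive definition gives $D(\sigma_{S,\beta},u)\subseteq D(\sigma_{S,\beta},v)$, so chaining the inclusions produces $D(\sigma_{S,\beta},w)\subseteq D(\sigma_{S,\beta},v)$, as required.

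There is no genuine obstacle here; the statement is essentially a restatement of the fact that a recursive union is downward closed under the recursion. The only subtlety worth flagging is the need to handle the $\infty$ case separately so that the induction is well-founded on $\mathbb{N}$ rather than $\mathbb{N}\cup\{\infty\}$, and to observe that every node reached during the recursive unrolling has a finite layer (hence belongs to $S$), which ensures that the hypothesis "$v\in S$" in the observation is preserved throughout the induction.
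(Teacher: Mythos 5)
Your proof is correct and takes essentially the same approach as the paper: the paper gives a one-sentence informal version of the same observation (that $w$ can only enter $D(\sigma_{S,\beta},v)$ by way of $D(\sigma_{S,\beta},w)$ being unioned in along a chain of strictly decreasing layers), and your induction on $\sigma_{S,\beta}(v)$ simply spells out that recursive unrolling rigorously. The extra care you take with the $\infty$ case and with noting that finite-layer nodes lie in $S$ is a welcome tightening of what the paper leaves implicit.
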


\begin{proof}
	By \Cref{def:dependencyGraph}, a node $w$ can be added to $D(\sigma_{S,\beta},v)$ only by adding $D(\sigma_{S,\beta},w)$ to $D(\sigma_{S,\beta},v)$, and $D(\sigma_{S,\beta},w)$ contains $w$ by definition.
\end{proof}

\begin{lemma} \label{lem:lessThanBeta}
	Consider a graph $G=(V,E)$, a subset $S \subseteq V$ and any $\beta\geq1$. If $\sigma_{S,\beta}(v) \neq \infty$ for a node $v \in S$, then $|N(v) \setminus D(\sigma_{S,\beta},v)| \leq \beta$. 
\end{lemma}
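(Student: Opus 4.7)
The plan is to reduce the statement to a direct application of \Cref{lem:inducedPartitionProp}(ii). The key observation I want to establish first is that every neighbor of $v$ with strictly smaller layer is automatically contained in the dependency graph of $v$.

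Concretely, I will argue as follows. Let $v \in S$ with $\sigma_{S,\beta}(v) \neq \infty$. Consider any neighbor $w \in N^{<}(v)$, i.e., $\sigma_{S,\beta}(w) < \sigma_{S,\beta}(v)$. By \Cref{def:dependencyGraph}, the set $D(\sigma_{S,\beta},v)$ contains $\bigcup_{u \in N^{<}(v)} D(\sigma_{S,\beta},u)$. Since $\sigma_{S,\beta}(w) \neq \infty$ (as it is strictly smaller than the finite value $\sigma_{S,\beta}(v)$), the first two bullets of \Cref{def:dependencyGraph} together with \Cref{obs:nestedDepG} ensure that $w \in D(\sigma_{S,\beta},w)$, and hence $w \in D(\sigma_{S,\beta},v)$. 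Therefore
\begin{equation*}
N(v) \setminus D(\sigma_{S,\beta},v) \;\subseteq\; \{\, w \in N(v) \;:\; \sigma_{S,\beta}(w) \geq \sigma_{S,\beta}(v)\,\}.
\end{equation*}

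The second step is just to invoke \Cref{lem:inducedPartitionProp}(ii), which (since $\sigma_{S,\beta}(v) \neq \infty$) tells us that the right-hand side has size at most $\beta$. Combining the inclusion with the cardinality bound yields $|N(v) \setminus D(\sigma_{S,\beta},v)| \leq \beta$, as required.

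I do not foresee a genuine obstacle here; the only subtlety is the quick check that $w \in N^{<}(v)$ really does lie in its own dependency graph (which requires noting $\sigma_{S,\beta}(w) \neq \infty$ so that the empty-set base case of \Cref{def:dependencyGraph} does not apply). Once this is in place the proof is essentially one line.
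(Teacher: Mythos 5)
Your proof is correct and takes the same route as the paper: show that every neighbor with strictly smaller layer lies in $D(\sigma_{S,\beta},v)$, so $N(v)\setminus D(\sigma_{S,\beta},v)$ is contained in the set of neighbors $w$ with $\sigma_{S,\beta}(w)\geq\sigma_{S,\beta}(v)$, and then apply \Cref{lem:inducedPartitionProp}(ii). You merely spell out the small verification that such a $w$ indeed lies in its own dependency graph (because $\sigma_{S,\beta}(w)\neq\infty$), a detail the paper leaves implicit.
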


\begin{proof}
	We write $\sigma$ instead of $\sigma_{S,\beta}$. By \Cref{def:dependencyGraph}, set $N(v) \cap D(\sigma,v)$ contains all nodes $w \in N(v)$ for which $\sigma(w)< \sigma(v)$. Hence, $N(v) \setminus D(\sigma,v)$ contains all nodes for which $\sigma(w) \geq \sigma(v)$. Because $\sigma(v)\neq \infty$, by \Cref{lem:inducedPartitionProp} it holds that $\sigma(w)\geq \sigma(v)$ for at most $\beta$ nodes $w \in N(v)$.
\end{proof}

\subsection{Natural $\beta$-partition}

We want to accentuate the case when the whole node set of a graph is used to construct an induced $\beta$-partition. For a graph $G=(V,E)$, we call the $V$-induced $\beta$-partition the \emph{natural $\beta$-partition}. 

\begin{definition}[Natural $\beta$-partition] \label{def:naturalHpartition} 
	Consider a graph $G=(V,E)$, and any $\beta\geq1$. The natural $\beta$-partition $\ell_\beta$ is defined as the $V$-induced $\beta$-partition $\sigma_{V,\beta}$.
\end{definition}

For $\beta=(2+\eps)\alpha$ the natural $\beta$-partition corresponds to the $H$-partition from \cite{BE10}. 
Our motivation behind the natural $\beta$-partition definition is that $\ell_\beta$ assigns the lowest possible layer to all nodes in the graph among all induced $\beta$-partitions. More formally, we have the following lemma.

\begin{lemma} \label{lem:naturalBest}
	Consider a graph $G=(V,E)$, a subset $S \subseteq V$ and any $\beta\geq1$. It holds that $\sigma_{S,\beta}(v) \geq \ell_{\beta}(v)$ for every $v \in V$. 
\end{lemma}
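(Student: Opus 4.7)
The statement follows almost immediately from the monotonicity result already established in \Cref{lem:sigmaDec}. My plan is simply to invoke this lemma with the chain $S \subseteq V$, using the fact that the natural $\beta$-partition $\ell_\beta$ is by definition identical to $\sigma_{V,\beta}$.

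In more detail, \Cref{lem:sigmaDec} states that for any two subsets $S \subseteq T \subseteq V$ of a graph $G=(V,E)$, the induced $\beta$-partitions satisfy $\sigma_{S,\beta}(v) \geq \sigma_{T,\beta}(v)$ for all $v \in V$. Applying this with $T = V$, which is a superset of $S$, yields $\sigma_{S,\beta}(v) \geq \sigma_{V,\beta}(v)$ for all $v \in V$. By \Cref{def:naturalHpartition}, $\ell_\beta(v) = \sigma_{V,\beta}(v)$, so this is exactly the claim.

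There is no real obstacle here: the content of the lemma is a direct specialization of the previously proved monotonicity statement, phrased in terms of the newly introduced notion of the natural $\beta$-partition. The only thing to verify is that \Cref{def:inducedPartition} makes sense when the subset is all of $V$, which is clear from the definition since $V \subseteq V$ trivially holds. Thus the proof is just a one-line invocation of \Cref{lem:sigmaDec}, and no additional calculation is required.
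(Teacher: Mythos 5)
Your proof is correct and matches the paper's argument essentially verbatim: both invoke \Cref{lem:sigmaDec} with $T = V$ and then unfold \Cref{def:naturalHpartition} to identify $\sigma_{V,\beta}$ with $\ell_\beta$. No differences worth noting.
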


\begin{proof}
	By \Cref{lem:sigmaDec}, it holds that $\sigma_{S,\beta}(v) \geq \sigma_{T,\beta}(v)$ for any subgraphs $S \subseteq T \subseteq V$ and every node $v \in V$. When $T=V$, it holds that $\sigma_{S,\beta}(v) \geq \sigma_{V,\beta}(v)=\ell_\beta(v)$.
\end{proof}

Later, in our algorithmic usage of induced $\beta$-partitions, we will compute many of these partitions in parallel on various induced subgraphs $G[S]$, $S\subseteq V$ of the input graph $G$. The next lemma states that the layer assignment of a node $v$ in $\sigma_{S,\beta}$ corresponds to its layer assignment in the natural $\beta$-partition if $G[S]$ contains the dependency graph of $v$.
\begin{lemma} \label{lem:DGinS}
	Consider a graph $G=(V,E)$, a subset $S \subseteq V$ and any $\beta\geq1$. If $D(\ell_\beta,v) \subseteq S$ for a node $v \in V$, then $\sigma_{S,\beta}(w)=\ell_\beta(w)$ for every $w \in D(\ell_\beta,v)$.
\end{lemma}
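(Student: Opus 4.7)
The plan is to combine \Cref{lem:naturalBest} (which immediately yields the inequality $\sigma_{S,\beta}(w)\geq \ell_\beta(w)$ for every $w\in V$) with a matching upper bound $\sigma_{S,\beta}(w)\leq \ell_\beta(w)$ for every $w\in D(\ell_\beta,v)$. Only the upper bound is non-trivial, and I would prove it by strong induction on $i=\ell_\beta(w)$, mimicking the layer-by-layer construction of $\sigma_{S,\beta}$ in \Cref{def:inducedPartition}.

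For the base case $i=0$, if $w\in D(\ell_\beta,v)$ and $\ell_\beta(w)=0$, then in the natural partition $w$ was layered during the very first iteration on the full vertex set, which by \Cref{def:inducedPartition} means $\deg(w)\leq \beta$. Since $w\in S$ and in iteration $0$ of the construction of $\sigma_{S,\beta}$ every neighbor is still at $\infty$, at most $\beta$ of $w$'s neighbors have $\sigma_{S,\beta}=\infty$, so $\sigma_{S,\beta}(w)\leftarrow 0$.

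For the inductive step, fix $w\in D(\ell_\beta,v)$ with $\ell_\beta(w)=i\geq 1$ and assume the claim for all nodes of $D(\ell_\beta,v)$ with smaller $\ell_\beta$-value. By \Cref{lem:inducedPartitionProp}(ii) applied to $\ell_\beta=\sigma_{V,\beta}$, at most $\beta$ neighbors $u\in N(w)$ satisfy $\ell_\beta(u)\geq i$; every other neighbor $u$ satisfies $\ell_\beta(u)<i$ and, by \Cref{def:dependencyGraph}, lies in $D(\ell_\beta,w)$. By \Cref{obs:nestedDepG}, $D(\ell_\beta,w)\subseteq D(\ell_\beta,v)\subseteq S$, so the inductive hypothesis applies and gives $\sigma_{S,\beta}(u)=\ell_\beta(u)<i$ for every such $u$. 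Hence at the start of iteration $i$ in the construction of $\sigma_{S,\beta}$, the only neighbors of $w$ still carrying label $\infty$ are (a subset of) the at most $\beta$ neighbors with $\ell_\beta(u)\geq i$. Combined with $w\in S$ and $\sigma_{S,\beta}(w)\geq \ell_\beta(w)=i$ from \Cref{lem:naturalBest} (so $w$ has not been layered earlier), iteration $i$ of \Cref{def:inducedPartition} assigns $\sigma_{S,\beta}(w)\leftarrow i=\ell_\beta(w)$, closing the induction.

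The main obstacle I anticipate is purely bookkeeping: making sure that when counting the $\infty$-neighbors of $w$ at iteration $i$, every neighbor whose $\ell_\beta$-value is strictly smaller than $i$ really has already been labelled in the $\sigma_{S,\beta}$-construction by that step. This is exactly what the nested property of dependency graphs (\Cref{obs:nestedDepG}) buys us: the inductive hypothesis is applicable to those neighbors because they automatically sit inside $D(\ell_\beta,v)\subseteq S$, so no neighbor with smaller $\ell_\beta$-value can "escape" and wreck the count.
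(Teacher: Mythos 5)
Your proof is correct and rests on the same core idea as the paper's: an induction on $\ell_\beta(w)$ in which the nested property of dependency graphs guarantees that every lower-layer neighbor of $w$ lies inside $D(\ell_\beta,v)\subseteq S$ (so the inductive hypothesis applies to it), leaving at most $\beta$ neighbors that could still be unassigned when iteration $\ell_\beta(w)$ of the $\sigma_{S,\beta}$-construction is reached. The only structural difference is organizational: the paper first proves the equality in the minimal case $S=D(\ell_\beta,v)$ and then lifts it to an arbitrary superset $S$ via the monotonicity of induced partitions (\Cref{lem:sigmaDec}) combined with \Cref{lem:naturalBest}, whereas you run the induction over the given $S$ directly and invoke \Cref{lem:naturalBest} inside the inductive step both as the matching lower bound and to rule out a too-early assignment of $w$. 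Both routes are valid; yours skips the intermediate reduction and one auxiliary lemma at the cost of having \Cref{lem:naturalBest} do double duty inside the induction.
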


\begin{proof}
	For readability, we write $\sigma_{S}$ and $\ell$ instead of $\sigma_{S,\beta}$ and $\ell_{\beta}$, respectively.
	
	Let us first prove the claim for $S=D(\ell,v)$ by induction on the values of $\ell(v)$, the base case being $\ell(w)=0$. The base case implies that $\deg(w)\leq \beta$, so by \Cref{def:inducedPartition} it holds that $\sigma_S(w)=0$. Assume that $\sigma_S(w)=\ell(w)$ for all nodes $w \in S$ with $\ell(w) < i$. Consider a node $u \in S$ with layer $\ell(u)=i$. Observe that all nodes $w \in N(u)$ with $\ell(w)<\ell(u)=i$ are in $S$ and $\sigma_S(w)=\ell(w)$ by induction. Since $\ell(u)=i$, by \Cref{lem:inducedPartitionProp}, there are at most $\beta$ nodes $w \in N(u)$ with $\ell(w) \geq \ell(u)$, implying that $\sigma_S(u)$ will be $i$.
	
	We have established that $\sigma_{D(\ell,v)}(w)=\ell(w)$ for every node $w \in D(\ell,v)$ if $S=D(\ell,v)$. If $S \supset D(\ell,v)$, the previous implies $\sigma_S(w) \leq \ell(w)$, because by \Cref{lem:sigmaDec} it holds that $\sigma_S(w) \leq \sigma_{D(\ell,v)}(w)$. On the other hand, by \Cref{lem:naturalBest}, for any subgraph $S \subseteq G$ it holds that $\sigma_S(w) \geq \ell(w)$ for every node $w \in S$. Hence, if $D(\ell,v) \subseteq S$ it holds that $\sigma_S(w)=\ell(w)$ for every node $w \in D(\ell,v)$.
\end{proof}

\section{An \lca for Partial $\beta$-partition} \label{sec:centralizedPartition}

In this section, we present an \lca for constructing a partial $\beta$-partition (\Cref{def:hPartition}) of the input graph using a coin dropping game. The game is formulated from the perspective of a single node making local queries to the graph and can be easily implemented as an \lca (see \Cref{lem:coinDropLCAformal}). Later, we will simulate this game for many nodes in parallel in the \ampc model (see \Cref{sec:AMPCPartition}).

\subsection{The Coin Dropping Game} \label{sec:coinDroppingGame}

The high-level idea of the coin dropping game for a node $v$ in graph $G=(V,E)$ is as follows. During the game, we keep track of a certain subset of nodes $S_v \subseteq V$ that initially contains only node $v$, but grows in every super-iteration. In the beginning of every super-iteration, we compute the $S_v$-induced $\beta$-partition $\sigma_{S_v,\beta}$ from \Cref{def:inducedPartition}. One can think of $\sigma_{S_v,\beta}(v)$ as an approximation of the natural $\beta$-partition $\ell_{\beta}=\sigma_{V,\beta}$ (see \Cref{def:naturalHpartition}). During a super-iteration, we perform a coin distribution process (consisting of iterations) that uses $\sigma_{S_v,\beta}$ as a guiding mechanism to add certain nodes to $S_v$ such that $\sigma_{S_v,\beta}(v)$ of the next super-iteration is a ``better'' approximation of  $\ell_{\beta}(v)$.

The coin distribution process consist of iterations, in each of which, nodes holding coins forward them along the edges of the graph according to certain \emph{local forwarding rules}. In the very first iteration, only node $v$ is given coins. The whole process can be viewed as a flow of coins that starts at node $v$ and spreads through the graph. The flow stops when it reaches nodes outside of $S_v$, after which all nodes outside of $S_v$ that have received a coin are added to $S_v$ of the next super-iteration. The local forwarding rules manifest as \emph{forwarding sets} for every node in $S_v$, which require computing the $S_v$-induced $\beta$-partition. Informally, the forwarding set of a node $u$ is a small subset of $u$'s neighbors, to whom $u$ will be forwarding its coins during the super-iteration. 
More formally, see \Cref{def:forwardingSet}.
While keeping the size of set $S_v$ roughly below $x^3$ for some parameter $x$, we ensure that eventually $\sigma_{S_v,\beta}(v)=\ell_{\beta}(v)$ for all nodes that have small enough dependency graphs, i.e., $|D(\ell_{\beta},v)|<x^2$ and a small enough layer in the natural $\beta$-partition, i.e., $\ell_{\beta}(v)\leq \log_{\beta+1} x$. We emphasize that the game does not guarantee that $D(\ell_{\beta},v) \subseteq S_v$, which complicates the analysis.

\begin{definition}[Forwarding set] \label{def:forwardingSet}
	Consider a graph $G=(V,E)$, a subset $S \subseteq V$ and any $\beta\geq1$. Let $\sigma_{S,\beta}$ be the $S$-induced $\beta$-partition. The forwarding set $F(\sigma_{S,\beta},u)$ for a node $u \in S$ is the set of any $\min\{\deg(u),\beta+1\}$ nodes $w \in N(u)$ with the highest $\sigma_S(w)$ values. 
\end{definition}

\paragraph{$(x,\beta,F)$-coin dropping game.} Consider executing the $(x,\beta,F)$-coin dropping game for a node $v$ in a graph $G=(V,E)$. Throughout the game, we keep track of a subset $S_v \subseteq V$. The game consists of $x^2$ super-iterations, where one super-iteration proceeds as described in \Cref{alg:coinDropping}. In the first super-iteration $S_v$ is initialized to $S_v \larr \{v\}$.

\begin{algorithm}[H]
	\caption{A super-iteration of the $(x,\beta,F)$-coin dropping game for a node $v \in V$.}\label{alg:coinDropping}
	\begin{algorithmic}[1]
		\vspace{1mm}
		\State{Compute the $S_v$-induced $\beta$-partition $\sigma_{S_v,\beta}$ and the forwarding set $F(\sigma_{S_v,\beta},u)$ for all $u \in S_v$.\vspace{1mm}}
		\State{Node $v$ is given $x$ coins.\vspace{1mm}}
		\State{For $|V|$ iterations: every node $u \in S_v$ with $x'\geq |F(\sigma_{S_v,\beta},u)|$ coins forwards $x'/|F(\sigma_{S_v,\beta},u)|$ coins to every node in $F(\sigma_{S_v,\beta},u)$. \vspace{1mm}} 
		\State{Every node $u \not\in S_v$ with coins is added to $S_v$.} 
	\end{algorithmic}
\end{algorithm}

\subsection{Game Analysis} \label{sec:gameAnalysis}

Let us analyze the coin dropping game. Recall that \Cref{alg:coinDropping} refers to one super-iteration of the game. Recall that nodes cannot compute the natural $\beta$-partition $\ell_{\beta}$ themselves. However, we utilize $\ell_{\beta}$ for analysis. The following lemma ensures progress, and informally it states that if the $S_v$-induced $\beta$-partition assigns a layer to node $v$ which does not equal $\ell_{\beta}(v)$ (it can be anything between $\ell_{\beta}(v)$ and $\infty$), a node from the dependency graph $D(\ell_{\beta},v)$ is added to $S_v$ which was not in $S_v$ previously.

\begin{lemma}[Measure of progress] \label{lem:measureOfprogress}
	Consider a graph $G=(V,E)$, any $\beta,x\geq1$ and the natural $\beta$-partition $\ell_{\beta}$. Consider the beginning of an arbitrary super-iteration of the $(x,\beta,F)$-coin dropping game for a fixed node $v \in V$ and the corresponding set $S_v \subseteq V$. If $|D(\ell_{\beta},v)|\leq x^2$, $\ell_{\beta}(v) \leq \log_{\beta+1} x$ and $\sigma_{S_v,\beta}(v) > \ell_{\beta}(v)$, a node $w \in D(\ell_{\beta},v) \setminus S_v$ is added to $S_v$ at the end of the super-iteration.
\end{lemma}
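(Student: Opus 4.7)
The plan is to pinpoint a short ``witness path'' inside $D(\ell_\beta, v)$ from $v$ to some vertex outside $S_v$ and then argue that the coin flow in one super-iteration is forced to traverse this path.

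For the first step I would invoke Lemma~\ref{lem:DGinS} contrapositively: the hypothesis $\sigma_{S_v,\beta}(v) > \ell_\beta(v)$ immediately gives $D(\ell_\beta, v) \not\subseteq S_v$, so some $w \in D(\ell_\beta, v) \setminus S_v$ exists. By Definition~\ref{def:dependencyGraph} each such $w$ is reachable from $v$ via a strictly-decreasing-$\ell_\beta$ path entirely inside $D(\ell_\beta, v)$, and its length is at most $\ell_\beta(v) \leq \log_{\beta+1} x$. Choosing $w$ and such a path of minimum length $k$, the intermediate nodes $u_0 = v, u_1, \ldots, u_{k-1}$ are forced to lie in $S_v$ (otherwise a shorter witness would exist). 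Next I would establish the crucial structural fact that every $u \in D(\ell_\beta, v)$ satisfies $|N(u) \setminus D(\ell_\beta, v)| \leq \beta$; this follows by combining Observation~\ref{obs:nestedDepG} (which gives $D(\ell_\beta, u) \subseteq D(\ell_\beta, v)$) with Lemma~\ref{lem:lessThanBeta} applied to the natural $\beta$-partition $\ell_\beta = \sigma_{V,\beta}$. Consequently, for every $u \in D(\ell_\beta, v) \cap S_v$ the $(\beta+1)$-sized forwarding set $F(\sigma_{S_v,\beta}, u)$ must contain at least one neighbor inside $D(\ell_\beta, v)$; and because the forwarding rule prefers top-$\sigma_{S_v,\beta}$ neighbors while every $w' \in D(\ell_\beta, v) \setminus S_v$ carries $\sigma_{S_v,\beta}(w') = \infty$, such out-of-$S_v$ neighbors are strictly prioritized.

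With these two ingredients in place, I would track the coin mass inductively: after iteration $i$ of Step~3 of Algorithm~\ref{alg:coinDropping}, at least $x/(\beta+1)^i$ coins reach some node of $D(\ell_\beta, v)$ via an $i$-step forwarding walk from $v$ (the structural fact supplies the required in-$D(\ell_\beta, v)$ successor at every step, and each forwarding divides coin mass by at most $\beta+1$). After $k \leq \log_{\beta+1} x$ such iterations, the surviving $\geq x/(\beta+1)^k \geq 1$ coin sits at some node of $D(\ell_\beta, v)$, and to finish I would exhibit a forwarding walk of length $\leq k$ ending in $D(\ell_\beta, v) \setminus S_v$, steering the greedy rule toward the witness path via the $\infty$-priority of non-$S_v$ neighbors at the boundary step $u_{k-1} \to u_k$. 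Since Step~3 runs for $|V| \geq k$ iterations, Step~4 then adds the resulting node to $S_v$, yielding the lemma's conclusion.

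The main obstacle I anticipate is showing that the greedy, witness-oblivious forwarding rule does not keep coins trapped inside $D(\ell_\beta, v) \cap S_v$: the forwarding set may select $\sigma_{S_v,\beta} = \infty$ neighbors that are stalled inside $S_v$ rather than the targeted $w \in D(\ell_\beta, v) \setminus S_v$, or may divert mass to the $\leq \beta$ neighbors outside $D(\ell_\beta, v)$. Here Lemma~\ref{lem:inducedPartitionProp}(i) is the key hammer: every stalled $\infty$-node $u' \in D(\ell_\beta, v) \cap S_v$ itself has $\geq \beta+1$ $\infty$-neighbors, of which at most $\beta$ lie outside $D(\ell_\beta, v)$, so at least one $\infty$-neighbor of $u'$ lies inside $D(\ell_\beta, v)$. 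Iterating, the flow cannot circulate indefinitely among the finite set of stalled nodes inside $S_v$ and must eventually spill out into $D(\ell_\beta, v) \setminus S_v$, justifying the greedy steering step within the promised $k$ iterations.
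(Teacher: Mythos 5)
Your overall strategy is the same as the paper's: build a path inside $D(\ell_\beta,v)$ that the coin flow is forced to follow, show it has length at most $\log_{\beta+1}x$, and conclude that a node of $D(\ell_\beta,v)\setminus S_v$ receives a coin. However, there is a genuine gap in how the path is constructed, and it is exactly the gap that the paper's Lemma~\ref{lem:recursion} is designed to close.

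You first pick an $\ell_\beta$-decreasing ``witness path'' to a node outside $S_v$ and then hope to steer the coins along it; but, as you yourself note, the forwarding rule is witness-oblivious, so you pivot to tracking coin mass inside $D(\ell_\beta,v)$. For that you use the structural fact $|N(u)\setminus D(\ell_\beta,v)|\leq\beta$ to argue $F(\sigma_{S_v,\beta},u)\cap D(\ell_\beta,v)\neq\emptyset$. This is too weak in two respects. First, you only get a forwarding-set neighbor inside $D(\ell_\beta,v)$, not inside $D(\ell_\beta,u)$; the latter is what gives $\ell_\beta(w)<\ell_\beta(u)$, i.e.\ a strictly decreasing potential and hence a length bound of at most $\ell_\beta(v)\leq\log_{\beta+1}x$. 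With only ``stays inside $D(\ell_\beta,v)$'' the hop count is unbounded, and ``must eventually spill out'' does not help because the coin mass divides by $\beta+1$ at every hop and dies after $\log_{\beta+1}x$ steps. Second, and more importantly, nothing in your argument prevents the coin path from terminating at a node $u\in S_v$ that already has the \emph{correct} layer, $\sigma_{S_v,\beta}(u)=\ell_\beta(u)$ (for instance a sink with $\deg(u)\leq\beta$ and $\sigma_{S_v,\beta}(u)=0$). At such a node there is no reason for the flow to keep making progress toward $D(\ell_\beta,v)\setminus S_v$. Your appeal to Lemma~\ref{lem:inducedPartitionProp}(i) only covers the $\sigma=\infty$ case and, even there, $\infty$-neighbors \emph{inside} $S_v$ are tied with the $\infty$-neighbors outside $S_v$ and may be chosen instead, so ``strictly prioritized'' is not correct.

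The paper's proof avoids all of this by building the coin path and the correctness invariant simultaneously: Lemma~\ref{lem:recursion} shows that any $u\in S_v$ with $\deg(u)\geq\beta+1$ and $\sigma_{S_v,\beta}(u)>\ell_\beta(u)$ has a forwarding-set neighbor $w\in F(\sigma_{S_v,\beta},u)\cap D(\ell_\beta,u)$ that again satisfies $\sigma_{S_v,\beta}(w)>\ell_\beta(w)$. Taking the longest such path from $v$ yields a forwarding walk inside $D(\ell_\beta,v)$ with $\ell_\beta$ strictly decreasing (length $\leq\log_{\beta+1}x$) on which $\sigma>\ell$ is maintained at every node; the $\sigma>\ell$ invariant rules out stopping at a correctly-layered node or a low-degree sink inside $S_v$, so the only way the path can end is at a node outside $S_v$, which then receives at least one coin. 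To repair your proof you would need to replace the weak structural fact with exactly this invariant, which essentially means proving Lemma~\ref{lem:recursion}.
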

\begin{proof}
	For readability, we write $\sigma$ and $\ell$ instead of $\sigma_{S_v,\beta}$ and $\ell_{\beta}$, respectively. Consider a node $v \in V$ as in the lemma statement. Recall that $v \in S_v$ by definition. Since $\sigma(v)=\infty$ by the lemma statement, we can assume that $\deg(v)\geq \beta +1$ (otherwise $\sigma(v)=0=\sigma_{V}(v)$ by definition).  Let $P=(w_0,w_1,\dots,w_k)$ be the longest path path where 
	\begin{align*}
		w_0=v \text{ and } w_{i} \in F(\sigma,w_{i-1}) \cap D(\ell,w_{i-1})
	\end{align*}
	
	By the guarantees of \Cref{lem:recursion}, for every $w_i$ it holds that $\sigma(w_i) > \sigma(w_i)$. Observe that by \Cref{obs:nestedDepG}, it holds that $P \subseteq D(\ell,v)$. Because $P$ cannot be any longer, \Cref{lem:recursion} cannot be applied to $w_k$, so either $w_k \not\in S_v$ or $\deg(w_k) \leq \beta$. There are two possible cases.

	\begin{itemize}
		\item $w_k \in S_v$ and $\deg(w_k) \leq \beta$: This implies $\sigma(w_k)=0$, contradicting $\sigma(w_k) > \ell(w_k)$, since $\ell$ cannot be negative.
		\item $w_k \not\in S_v$: Because $w_i \in D(\ell,w_{i-1})$, it holds that $\ell(w_{i-1})>\ell(w_i)$ by definition for every $i \in [1,k]$. Since $\ell(v) \leq \log x$ by the lemma statement, it must hold be $k \leq \log x$. In \Cref{alg:coinDropping}, $v (=w_0)$ forwards $x/(\beta +1)$ coins to $w_1$, who forwards $x/(\beta +1)^2$ coins to $w_2$, and so on. Since $k \leq \log x$, node $w_k$ will receive at least $x/(\beta+1)^{\log_{\beta+1} x} = 1$ coin(s) and will be added to $S_v$ at the end of the super-iteration. Because $w_k \not\in S$ and $w_k \in P \subseteq D(\ell,v)$, it holds that $w_k \in D(\ell,v) \setminus S_v$, completing the proof. \qedhere
	\end{itemize}
\end{proof}

The following lemma was used in \Cref{lem:measureOfprogress}. Informally, it ensures that roughly a $1/\beta$ fraction of coins forwarded by a node $u \in S_v$ are forwarded to nodes in $D(\ell_{\beta},u)$.

\begin{lemma} \label{lem:recursion}
	Consider a graph $G=(V,E)$, any $\beta,x\geq1$ and the natural $\beta$-partition $\ell_{\beta}$. Consider the $(x,\beta,F)$-coin dropping game for a fixed node $v \in G$ and the corresponding set $S_v \subseteq V$ during any super-iteration. Let $u \in S_v$ be a node with $\deg(u) \geq \beta+1$. If $\sigma_{S_v,\beta}(u) > \ell_{\beta}(u)$, then there exists at least one node $w \in F(\sigma_{S_v,\beta},u) \cap D(\ell_{\beta},u)$ for which $\sigma_{S_v,\beta}(w) > \ell_{\beta}(w)$.
\end{lemma}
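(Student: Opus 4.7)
The plan is a proof by contradiction. Write $\sigma:=\sigma_{S_v,\beta}$ and $\ell:=\ell_\beta$ for brevity, and assume towards contradiction that every $w\in F(\sigma,u)\cap D(\ell,u)$ satisfies $\sigma(w)=\ell(w)$; since \Cref{lem:naturalBest} already gives $\sigma(w)\geq \ell(w)$ pointwise, this is the only way the stated conclusion can fail. The target is to deduce $\sigma(u)\leq \ell(u)$, contradicting the hypothesis $\sigma(u)>\ell(u)$.

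First I would verify that the intersection $F(\sigma,u)\cap D(\ell,u)$ is non-empty. The hypothesis $\sigma(u)>\ell(u)$ forces $\ell(u)\neq \infty$, so \Cref{lem:lessThanBeta} applied to the natural partition yields $|N(u)\setminus D(\ell,u)|\leq \beta$; combined with $|F(\sigma,u)|=\beta+1$ (using the assumption $\deg(u)\geq \beta+1$), the intersection must contain at least one node.

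Next I would promote the contradiction hypothesis from $F(\sigma,u)\cap D(\ell,u)$ to all of $N(u)\cap D(\ell,u)$. By the recursive definition of the dependency graph, every $w\in N(u)\cap D(\ell,u)$ satisfies $\ell(w)<\ell(u)$. For $w\in F(\sigma,u)\cap D(\ell,u)$ the hypothesis therefore yields $\sigma(w)=\ell(w)\leq \ell(u)-1$. For $w\in (N(u)\cap D(\ell,u))\setminus F(\sigma,u)$ the top-$(\beta+1)$ selection rule of \Cref{def:forwardingSet} gives $\sigma(w)\leq \min_{w'\in F(\sigma,u)}\sigma(w')$, and the finite witness supplied by the previous step caps this minimum at $\ell(u)-1$.

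Finally, I would count the neighbors $w\in N(u)$ whose $\sigma$-value is at least $\ell(u)$. The previous step rules out every $w\in N(u)\cap D(\ell,u)$, so any such neighbor must lie in $N(u)\setminus D(\ell,u)$, a set of size at most $\beta$. Inspecting the iterative construction in \Cref{def:inducedPartition}, a neighbor of $u$ is still labeled $\infty$ at the start of iteration $\ell(u)$ exactly when its final $\sigma$-value is $\geq \ell(u)$; since there are at most $\beta$ such neighbors, $u$ itself is assigned in iteration $\ell(u)$ at the latest, so $\sigma(u)\leq \ell(u)$, contradicting the hypothesis. The main obstacle is the promotion step, because a priori $F(\sigma,u)$ can contain nodes outside $D(\ell,u)$ or with $\sigma$-value $\infty$; the contradiction hypothesis is exactly what forces a finite-valued member of $F(\sigma,u)$ to sit inside $D(\ell,u)$, allowing the top-$(\beta+1)$ rule to transport the bound to every remaining neighbor of $u$ in $D(\ell,u)$.
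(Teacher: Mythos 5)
Your proof is correct, but it is organized quite differently from the paper's. The paper gives a direct argument with a case split on whether $\sigma_{S_v,\beta}(u)=\infty$ or not: it fixes \emph{any} $w\in F(\sigma,u)\cap D(\ell,u)$ (non-empty by the same pigeonhole you use), invokes \Cref{lem:inducedPartitionProp} parts \textit{i.} and \textit{iii.} to lower-bound $\sigma(w)$ (by $\infty$ or by $\sigma(u)-1\geq\ell(u)$, respectively), and pairs this with $\ell(w)<\ell(u)$ from the dependency-graph definition. You instead argue by contradiction: assume $\sigma=\ell$ on all of $F(\sigma,u)\cap D(\ell,u)$, use the top-$(\beta+1)$ selection rule to propagate the bound $\sigma(w)\leq\ell(u)-1$ to every $w\in N(u)\cap D(\ell,u)$, and then re-derive from \Cref{def:inducedPartition} that $u$ would be assigned by iteration $\ell(u)$ --- rather than citing \Cref{lem:inducedPartitionProp} as a black box. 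The two are logically close (your final step is effectively the contrapositive of \Cref{lem:inducedPartitionProp}~\textit{i.}/\textit{iii.}), but yours avoids the case split and makes the ``why the forwarding rule works'' intuition more explicit: if the forwarding set never hit a node of $D(\ell,u)$ with inflated layer, then $u$'s own layer could not have been inflated in the first place. The paper's version is more modular since it reuses an already-proved lemma and does not need to reason again about the iterative peeling process. One small detail worth spelling out in your last step: you implicitly assume iteration $\ell(u)$ of the construction in \Cref{def:inducedPartition} is reached; this is fine because the peeling process stabilizes within $|S_v|$ iterations and the finite bounds $\sigma(w)\leq\ell(u)-1$ for all $w\in N(u)\cap D(\ell,u)$ already force $u$ to be assigned once those neighbors are peeled, so $\sigma(u)\leq\ell(u)$ follows regardless of whether $\ell(u)\leq|S_v|$.
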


\begin{proof}
	For readability, we write $\sigma$ and $\ell$ instead of $\sigma_{S_v,\beta}$ and $\ell_{\beta}$, respectively. Consider $u \in S_v$ as in the lemma statement. Since $\deg(u) \geq \beta+1$, it holds that $|F(\sigma,u)|=\beta+1$ by \Cref{def:forwardingSet}. Since $\ell(u) \neq \infty$, it holds that $|N(u) \setminus D(\ell,u)| \leq \beta$ by \Cref{lem:lessThanBeta}. Recall that $\ell(w) < \ell(u)$ holds for every node $w \in D(\ell,u) \setminus \{u\}$ by \Cref{def:dependencyGraph}. Hence, it holds that $\ell(w) < \infty$ for every node $w \in D(\ell,u) \setminus \{u\}$. 
	
	Let us consider the case where $\sigma(u)=\infty$. By \Cref{lem:inducedPartitionProp} there are at least $\beta+1$ nodes $w' \in N(u)$ with $\sigma(w')=\infty$. Hence, it holds that $\sigma(w)=\infty$ for every node $w \in F(\sigma,u)$. Since $|N(u) \setminus D(\ell,u)| \leq \beta$, $|F(\sigma,u)|=\beta+1$, and $F(\sigma,u) \subseteq N(u)$, there must be at least one node $w \in F(\sigma,u) \cap D(\ell,u)$. Because $w \in F(\sigma,u)$, it holds that $\sigma(w) = \infty$. Because $w \in D(\ell,u) \setminus \{u\}$, it holds that $\ell(w) < \infty$, proving the claim.
	
	Let us consider the case where $\sigma(u) \neq \infty$. By \Cref{lem:inducedPartitionProp}, it holds that $\sigma(w)\geq \sigma(u)-1$ for at least $\beta+1$ nodes $w \in N(u)$. Hence, it holds that $\sigma(w)\geq \sigma(u)-1$ for every node $w \in F(\sigma,u)$. Since $|N(u) \setminus D(\ell,u)| \leq \beta$, $|F(\sigma,u)|=\beta+1$, and $F(\sigma,u) \subseteq N(u)$, there must be at least one node $w \in F(\sigma,u) \cap D(\ell,u)$. Because $w \in F(\sigma,u)$, it holds that $\sigma(w) \geq \sigma(u)-1 \geq \ell(u)$. Because $w \in D(\ell,u)$, it holds that $\ell(u) > \ell(w)$. We conclude that $\sigma(w) > \ell(w)$.
\end{proof}

The following two lemmas bring everything together to prove that after $x^2$ super-iterations of the coin dropping game it holds that $\sigma_{S_v,\beta}(v)=\ell_\beta(v)$ and $|S_v|\leq x^3+1$.

\begin{lemma}[Correctness] \label{lem:correctness}
	Consider a graph $G=(V,E)$, any $\beta,x\geq1$ and the natural $\beta$-partition $\ell_{\beta}$. Consider the $(x,\beta,F)$-coin dropping game for a fixed node $v \in G$ and the corresponding set $S_v \subseteq V$. If $|D(\ell_\beta,v)|\leq x^2$ and $\ell_\beta(v) \leq \log_{\beta+1} x$, then upon termination $\sigma_{S_v,\beta}(v)=\ell_\beta(v)$.
\end{lemma}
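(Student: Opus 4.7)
The plan is to iterate \Cref{lem:measureOfprogress} under a simple potential argument. The framing observation is that \Cref{lem:naturalBest} already guarantees $\sigma_{S_v,\beta}(v) \geq \ell_\beta(v)$ throughout the game, so proving the target equality at termination is equivalent to ruling out strict inequality at the start of every super-iteration. Together with monotonicity of $\sigma$ in $S_v$ (\Cref{lem:sigmaDec}), it will also suffice to exhibit \emph{one} super-iteration where equality holds, since equality will then persist.

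First I would introduce the potential $\Phi := |D(\ell_\beta,v) \cap S_v|$. Because $S_v$ is only ever augmented in Step~4 of \Cref{alg:coinDropping}, $\Phi$ is monotonically nondecreasing across super-iterations. It starts with $\Phi \geq 1$ (since $v \in S_v$ initially, and $\ell_\beta(v) \leq \log_{\beta+1} x < \infty$ implies $v \in D(\ell_\beta,v)$ by \Cref{def:dependencyGraph}), and it is bounded above by $|D(\ell_\beta,v)| \leq x^2$ via the lemma's hypothesis.

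Next I would apply \Cref{lem:measureOfprogress} to every super-iteration whose start satisfies $\sigma_{S_v,\beta}(v) > \ell_\beta(v)$: its three hypotheses $|D(\ell_\beta,v)| \leq x^2$, $\ell_\beta(v) \leq \log_{\beta+1} x$, and strict inequality are exactly what we have, so the lemma hands us a node $w \in D(\ell_\beta,v) \setminus S_v$ added to $S_v$ during that super-iteration. That is, $\Phi$ strictly increases whenever strict inequality holds. Combining the two bounds on $\Phi$, strict inequality can hold at the start of at most $x^2 - 1$ super-iterations. Because the game runs for $x^2$ super-iterations, there must exist at least one super-iteration whose start satisfies $\sigma_{S_v,\beta}(v) = \ell_\beta(v)$.

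Finally I would close the argument by persistence: once $\sigma_{S_v,\beta}(v) = \ell_\beta(v)$ holds, it holds for the remainder of the game. Indeed, in later super-iterations $S_v$ can only grow, so by \Cref{lem:sigmaDec} the value $\sigma_{S_v,\beta}(v)$ can only decrease, while by \Cref{lem:naturalBest} it cannot drop below $\ell_\beta(v)$; hence it is pinned at $\ell_\beta(v)$ through termination. The main obstacle in this proof is not conceptual---the work is done by \Cref{lem:measureOfprogress}---but rather the bookkeeping needed to verify that the hypotheses of that lemma really are satisfied at the start of every super-iteration where strict inequality occurs, and that the potential $\Phi$ is strictly monotone and capped by $x^2$; both follow directly from the definitions.
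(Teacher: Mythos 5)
Your proof is correct and follows essentially the same strategy as the paper: iterate \Cref{lem:measureOfprogress} to show that a new node of $D(\ell_\beta,v)$ enters $S_v$ whenever $\sigma_{S_v,\beta}(v) > \ell_\beta(v)$, and then cap the number of such super-iterations by $|D(\ell_\beta,v)| \leq x^2$. The only cosmetic difference is that the paper phrases the same counting as a proof by contradiction and finishes via \Cref{lem:DGinS}, whereas you phrase it as an explicit potential argument with a persistence step via \Cref{lem:sigmaDec} and \Cref{lem:naturalBest}.
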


\begin{proof}
	Recall that the $(x,\beta,F)$-coin dropping game comprises of $x^2$ super-iterations. Consider $v \in V$ as in the lemma statement. Towards a contradiction, assume that $\sigma_{S_v,\beta}(v) \neq \ell_\beta(v)$ after $x^2$ super-iterations. By \Cref{lem:naturalBest}, this implies that $\sigma_{S_v,\beta}(v) > \ell_\beta(v)$ holds for $x^2$ super-iterations. Observe that $S_v$ can only increase in size during \Cref{alg:coinDropping}. Since $\sigma_{S_v,\beta}(v) > \ell_\beta(v)$ in every super-iteration, a node $w \in D(\ell_\beta,v) \setminus S_v$ is added to $S_v$ by \Cref{lem:measureOfprogress} in every super-iteration. Since $|D(\ell_\beta, v)|\leq x^2$, after $x^2$ super-iterations it will hold that $D(\ell_\beta,v) \subseteq S_v$, implying $\sigma_{S_v,\beta}(v) = \ell_\beta(v)$ by \Cref{lem:DGinS}, a contradiction.
\end{proof}

\begin{remark}[Nodes may not find their dependency graphs]
	In \Cref{lem:correctness} we do not guarantee that eventually $D(\ell_\beta,v) \subseteq S_v$, which would directly yield that $\sigma_{S_v,\beta}(v)=\ell_\beta(v)$ by \Cref{lem:DGinS}. Rather we prove only prove the sufficient statement $\sigma_{S_v,\beta}(v)=\ell_\beta(v)$ that can also happen even when $D(\ell_\beta,v) \not\subseteq S_v$.
\end{remark}

\begin{lemma}[Query Complexity] \label{lem:queryComplexity}
	Consider a graph $G=(V,E)$, any $\beta,x\geq1$ and the $(x,\beta,F)$-coin dropping game for a fixed node $v \in V$ and the corresponding set $S_v \subseteq V$. Upon termination, it holds that $G[S_v]$ is connected and contains at most $x^6$ edges. 
\end{lemma}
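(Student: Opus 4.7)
The plan is to prove the two claims separately: connectivity follows from an inductive argument over the super-iterations, while the edge bound reduces to a sharper node-count bound $|S_v| \leq x^3 + 1$ combined with the trivial estimate $\binom{|S_v|}{2}$.

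For connectivity, I induct on the number of super-iterations. Initially $S_v = \{v\}$, which is trivially connected. In any super-iteration, Step~3 restricts forwarding to nodes already in $S_v$, and each forwarded coin travels along a graph edge to a neighbor in $F(\sigma_{S_v,\beta}, u) \subseteq N(u)$. Hence any node $u \notin S_v$ that holds coins at the end of Step~3 must have received them directly from some node $u' \in S_v$ with $u \in N(u')$. Consequently, every node added to $S_v$ in Step~4 is adjacent to some node already in $S_v$, so adding it preserves connectivity, and by induction $G[S_v]$ is connected throughout the game.

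For the node-count bound, I use two coin-accounting invariants. First, the total number of coins is preserved at $x$ throughout a super-iteration: a node forwarding $x'$ coins distributes $x'/|F(\sigma_{S_v,\beta}, u)|$ coins to each of its $|F(\sigma_{S_v,\beta}, u)|$ forwarding-set neighbors, summing to exactly $x'$. Second, every individual transferred amount is at least $1$, because the forwarding precondition $x' \geq |F(\sigma_{S_v,\beta}, u)|$ forces $x'/|F(\sigma_{S_v,\beta}, u)| \geq 1$. Since nodes outside $S_v$ never forward, their coin balance only accumulates, so any node outside $S_v$ with nonzero coins at the end of a super-iteration holds at least one full coin. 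Combined with conservation of the $x$ total coins, at most $x$ nodes outside $S_v$ can hold any coins, so Step~4 adds at most $x$ nodes to $S_v$ per super-iteration. Across the $x^2$ super-iterations of the game, this yields $|S_v| \leq 1 + x \cdot x^2 = x^3 + 1$.

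The edge bound then follows at once from $|E(G[S_v])| \leq \binom{|S_v|}{2} \leq (x^3+1)x^3/2 \leq x^6$, valid for any $x \geq 1$. The only conceptually delicate point is the second accounting invariant: one must notice that the forwarding threshold is precisely what guarantees each recipient obtains at least one full coin, which is what converts ``total coins equal $x$'' into ``at most $x$ distinct recipients outside $S_v$.'' Everything else---coin conservation, the inductive step for connectivity, and the final edge estimate---is routine bookkeeping.
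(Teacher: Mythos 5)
Your proof is correct and follows essentially the same approach as the paper's: connectivity from the adjacency of newly added nodes, and the edge bound via $|S_v| \leq x^3 + 1$ from the fact that each super-iteration conserves the $x$ total coins and every recipient outside $S_v$ receives at least one whole coin. Your version is slightly more careful on one minor point --- the forwarding threshold in the algorithm is $x' \geq |F(\sigma_{S_v,\beta},u)| = \min\{\deg(u),\beta+1\}$, not a fixed $\beta+1$, and you correctly note this still yields $x'/|F| \geq 1$; you also spell out the final edge estimate $\binom{x^3+1}{2} \leq x^6$ explicitly, which the paper leaves implicit.
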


\begin{proof}
	During the game, a node can be added to $S_v$ only if it is adjacent to a node already in $S_v$. Hence, $G[S_v]$ is always connected. 
	
	Consider one super-iteration of the $(x,\beta,F)$-coin dropping game. Observe that the forwarding set of any node $u \in S_v$ is of size $\leq \beta+1$, and only nodes with $\geq \beta +1$ coins forward coins. Hence, if a node $w \not\in S_v$ receives coins, it receives $\geq 1$ coins. Since the total number of usable coins in a super-iteration is $x$, the number of nodes that can be added to $S_v$ in a super-iteration is at most $x$. The claim follows since there are $x^2$ super-iterations and $G[S_v]$ is connected.
\end{proof}

\subsection{The \lca} \label{sec:computePartitionARB}

In this section, we show the \lca for computing a partial $\beta$-partition with a large fraction of nodes being assigned a layer $ \neq \infty$. In particular, we prove the following lemma. 

\begin{lemma}[Sublinear \lca for partial $\beta$-partitions] 
	\label{lem:coinDropLCAformal} 
	For any constant $\eps>0$ and $\beta\geq (2+\eps)\alpha$ there is a deterministic \lca algorithm that uses at most $x^6$ queries per node on a graph $G$ with arboricity $\alpha$ and assigns each node a layer from $\mathbb{N}\cup \{\infty\}$ such that the following holds: 
	
	$\triangleright$ There exists a subset $S\subseteq V$ containing at least a $1-2^{1-\log x/\log_{\beta/(2\alpha)} (\beta+1)}$ fraction of vertices such that the layering of $S$ forms a $\beta$-partition of $G[S]$ with $\log_{\beta+1} x$ layers. $\triangleleft$
\end{lemma}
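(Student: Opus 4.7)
The plan is to run the $(x,\beta,F)$-coin dropping game from \Cref{alg:coinDropping} locally at each queried vertex $v$ and output $\sigma_{S_v,\beta}(v)$. The game is deterministic, and by \Cref{lem:queryComplexity} the set $S_v$ it explores induces a connected subgraph on at most $x^3+1$ vertices, hence at most $x^6$ edges; learning this subgraph costs $v$ at most $x^6$ adjacency-list queries, which matches the claimed query complexity.

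Set $k := \log_{\beta+1}x$ and let $\ell_\beta$ denote the natural $\beta$-partition of $G$ (\Cref{def:naturalHpartition}); because $\beta\ge(2+\eps)\alpha$ we have $a := \beta/(2\alpha) > 1$, so the BE construction terminates and $\ell_\beta$ is a genuine $\beta$-partition of $G$. I will take
\[
S := \{\,v\in V : \ell_\beta(v)\le k \text{ and } \sigma_{S_v,\beta}(v)=\ell_\beta(v)\,\}.
\]
On $S$ the assigned layers coincide with $\ell_\beta|_S$, so since $\ell_\beta$ satisfies the $\beta$-partition property in all of $G$ and $G[S]$ contains only a subset of the edges of $G$, the restriction is a $\beta$-partition of $G[S]$. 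All layers lie in $\{0,\dots,k\}$, giving at most $\log_{\beta+1}x+1$ distinct layers as required.

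The main task is bounding $|V\setminus S|$. By the contrapositive of \Cref{lem:correctness}, every $v\notin S$ satisfies $\ell_\beta(v)>k$ or $|D(\ell_\beta,v)|>x^2$. For the first, the standard Barenboim--Elkin shrinkage (\Cref{lem:generalManySmallDegreeNodes} applied iteratively to the subgraph of not-yet-layered vertices, whose arboricity is still $\le\alpha$ by \Cref{fact:ArbDec}) yields $|V_{\geq j+1}|\le |V_{\geq j}|/a$ and hence $|\{v:\ell_\beta(v)>k\}|\le (2\alpha/\beta)^{k+1}|V|\le (2\alpha/\beta)^k|V|$, which by a direct rewrite of exponents equals $2^{-c}|V|$ with $c := \log x / \log_{\beta/(2\alpha)}(\beta+1)$. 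For the second I use a double-counting / Markov argument: $u\in D(\ell_\beta,v)$ means $v$ is an ancestor of $u$ in the DAG obtained by orienting every edge from the higher-layer endpoint to the lower-layer one. By \Cref{lem:inducedPartitionProp}(ii), each vertex has at most $\beta$ neighbors in strictly higher layers, so going upward from $u$ the branching factor is $\le\beta$ and the total number of ancestors with layer $\le k$ is at most $\sum_{d=0}^{k}\beta^d = O(\beta^k) = O(x)$ (using $\beta^k\le(\beta+1)^k=x$). Summing over $u$ gives $\sum_{v:\ell_\beta(v)\le k}|D(\ell_\beta,v)| = O(x|V|)$, and Markov's inequality yields $|\{v:\ell_\beta(v)\le k,\ |D(\ell_\beta,v)|>x^2\}| = O(|V|/x)$. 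The trivial inequality $\beta/(2\alpha)\le \beta+1$ (valid for $\alpha\ge 1$) is equivalent to $\log_a(\beta+1)\ge 1$, which gives $1/x\le 2^{-c}$; hence the Markov bound is $O(2^{-c}|V|)$. Combining, $|V\setminus S| = O(2^{-c}|V|)$, which matches the claimed $2^{1-c}|V|$ fraction.

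The crux of the argument is the ancestor-count step: the asymmetry that DAG in-degrees are uniformly bounded by $\beta$ while out-degrees can be $\Theta(n)$ is exactly what makes $\sum|D(\ell_\beta,v)|$ sublinear per vertex and therefore compatible with the $x^2$ super-iteration budget that \Cref{lem:measureOfprogress} requires. Without this asymmetry, dependency graphs could accumulate on a handful of deep vertices, the Markov step would fail, and the LCA would provide no nontrivial guarantee beyond the BE shrinkage bound.
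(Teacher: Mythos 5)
Your proposal is correct and follows essentially the same route as the paper: run the coin-dropping game locally, output the induced-partition layer, take $S$ to be the nodes where the game succeeds, and bound $|V\setminus S|$ by splitting into the high-layer nodes (handled by the Barenboim--Elkin shrinkage via \Cref{lem:generalManySmallDegreeNodes}) and the low-layer nodes with oversized dependency graphs (handled by the ancestor double-count, which is exactly the paper's coin-charging argument in \Cref{lem:removeEnoughNodes}). The only deviation is cosmetic: you bound the ancestor count by the geometric sum $\sum_{d\le k}\beta^d=O(x)$ rather than $\beta^k<x$, which leaves you with a slightly looser $O(2^{-c})$ fraction instead of the exact $2^{1-c}$, but tightening the constant is routine.
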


\begin{proof}
	Consider $\eps$ and $\beta$ as in the lemma statement. When questioned about vertex $v$, the \lca performs the $(x,\beta,F)$-coin dropping game and outputs the $S_v$-induced $\beta$-partition layer of $v$. In order to perform the coin dropping game, the \lca needs to be able to compute the $S_v$-induced $\beta$-partition (\Cref{def:inducedPartition}) locally. This is possible since it only requires knowing $G[S_v]$ and the degrees of nodes in $S_v$. By \Cref{lem:queryComplexity}, subgraph $G[S_v]$ contains at most $x^6$ edges at any point during the game, bounding the query complexity of the \lca. 
	
	By \Cref{lem:correctness}, if $|D(\ell_\beta,v)|\leq x^2$ and $\ell_\beta(v) \leq \log_{\beta+1} x$, then upon termination the \lca can output $\ell_\beta(v)$, which is the layer of $v$ in the natural $\beta$-partition of graph $G$. By \Cref{lem:removeEnoughNodes}, there are at least $|V|\cdot (1-2^{1-\log x/\log_{\beta/(2\alpha)} (\beta+1)})$ such nodes, completing the proof.
\end{proof}

\begin{remark} \label{rem:LCAwithMinimum}
	Instead of \Cref{lem:coinDropLCAformal}, we actually obtain a stronger statement. Our \lca, if queried about a node $u$, outputs a partial $\beta$-partition $\ell_u: V\rightarrow \mathbb{N}\cup\{\infty\}$.
	
	$\triangleright$ There exists a subset $S\subseteq V$ containing at least a $1-2^{1-\log x/\log_{\beta/(2\alpha)} (\beta+1)}$ fraction of vertices, and $\lambda: V\rightarrow \mathbb{N}\cup\{\infty\}: v \mapsto \min_{u\in S}\ell_u(v)$ is a partial $\beta$-partition of $G$ with $\log_{\beta+1} x$ layers. $\triangleleft$
\end{remark}

\begin{proof}
	For a node $u \in V$, instead of only outputting the $S_v$-induced $\beta$-partition layer $\ell_u(u)$ in the proof of \Cref{lem:coinDropLCAformal}, the \lca can simply output the whole partition $\ell_u$. By \Cref{lem:closureUnderMinimum}, function $\lambda$ is a partial $\beta$-partition of $G$ with the desired properties.
\end{proof}

The following lemmas are purely combinatorial and do not depend on the coin dropping game nor any model of computation.

\begin{lemma} \label{lem:removeEnoughNodes}
	Consider a graph $G=(V,E)$ with arboricity $\alpha$ and the natural $\beta$-partition $\ell_{\beta}$ for any constant $\eps>0$ and $\beta\geq (2+\eps)\alpha$. For any $x>1$, let $X \subseteq V$ such that $|D(\ell_\beta,v)| \leq x^2$ and $\ell_{\beta}(v)\leq \log_{\beta+1} x$ for $v \in X$. It holds that 
	\begin{equation*}
		|V \setminus X| \leq |V| \cdot 2^{1-\log x/\log_{\beta/(2\alpha)} (\beta+1)}~.
	\end{equation*}
\end{lemma}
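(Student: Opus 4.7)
I would decompose the complement according to which defining condition of $X$ fails: set
\begin{align*}
Y_1 := \{v \in V : \ell_\beta(v) > \log_{\beta+1} x\}, \qquad Y_2 := \{v \in V\setminus Y_1 : |D(\ell_\beta, v)| > x^2\},
\end{align*}
so $V\setminus X = Y_1 \cup Y_2$. A direct logarithm computation yields
$(2\alpha/\beta)^{\log_{\beta+1} x} = 2^{\log_{\beta+1} x \cdot \log_2(2\alpha/\beta)} = 2^{-\log x/\log_{\beta/(2\alpha)}(\beta+1)}$,
so the target inequality becomes $|V\setminus X| \leq 2|V|\cdot(2\alpha/\beta)^{\log_{\beta+1} x}$; it therefore suffices to bound each of $|Y_1|$ and $|Y_2|$ by $|V|\cdot(2\alpha/\beta)^{\log_{\beta+1} x}$.

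\textbf{Bounding $|Y_1|$.} The natural $\beta$-partition $\ell_\beta = \sigma_{V,\beta}$ is built by the peeling procedure of \Cref{def:inducedPartition}, so $V_{\geq i+1}$ is exactly the set of vertices with $\deg_{G[V_{\geq i}]}(v) > \beta$. Since $G[V_{\geq i}]$ has arboricity at most $\alpha$ by \Cref{fact:ArbDec}, \Cref{lem:generalManySmallDegreeNodes} (applied with degree threshold $\beta$) gives $|V_{\geq i+1}| < (2\alpha/\beta)\cdot|V_{\geq i}|$. Iterating from $V_{\geq 0}=V$ yields $|V_{\geq i}| < (2\alpha/\beta)^i\cdot|V|$, and hence $|Y_1| = |V_{>\log_{\beta+1} x}| \leq |V|\cdot(2\alpha/\beta)^{\log_{\beta+1} x}$ (integer rounding of the exponent is absorbed because $2\alpha/\beta < 1$).

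\textbf{Bounding $|Y_2|$ --- the main obstacle.} For $v \in V \setminus Y_1$, the layer $\ell_\beta(v)$ is already small, so the only way $|D(\ell_\beta, v)|$ can exceed $x^2$ is via large downward branching somewhere inside the dependency graph; the naive recursion $|D(\ell_\beta, v)| \leq 1 + \sum_{u \in N^<(v)}|D(\ell_\beta, u)|$ is too weak since $|N^<(v)|$ is bounded only by $\deg(v)$. A useful starting observation is that if every $u \in D(\ell_\beta, v)$ satisfies $|N^<(u)| \leq B$, then a geometric sum gives $|D(\ell_\beta, v)| \leq B^{\ell_\beta(v)+1}$, which is at most $x^2$ once $B = O((\beta+1)^2)$ (since $\ell_\beta(v) \leq \log_{\beta+1} x$). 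Hence every $v \in Y_2$ must contain a ``heavy'' witness $u \in D(\ell_\beta, v)$ with $\deg(u) > B$. The plan is to charge each $v \in Y_2$ to such a witness, counting heavy vertices via \Cref{lem:generalManySmallDegreeNodes} with threshold $B$ and bounding the multiplicity by leveraging the forward-DAG out-degree bound $|N^\geq(u)| \leq \beta$ from \Cref{lem:inducedPartitionProp}(ii). The main difficulty is that naive ancestor-counting in the forward DAG is too loose; one must exploit additional structure of the natural partition (for instance, by iterating the peeling argument on a carefully chosen ``heavy-witness subgraph'' and jointly tracking the small-layer and heavy-descendant events) to produce the required $|V|\cdot(2\alpha/\beta)^{\log_{\beta+1} x}$ bound on $|Y_2|$.
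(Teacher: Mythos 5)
Your decomposition into $Y_1$ (large-layer nodes) and $Y_2$ (small-layer nodes with oversized dependency graph) is the same one the paper uses, and your bound on $|Y_1|$ via repeated application of \Cref{lem:generalManySmallDegreeNodes} is exactly the paper's argument and is correct. The gap is in $Y_2$, where your proposal is not a proof but a plan with a declared hole, and the plan as stated does not lead to the needed bound.

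Your heavy-witness charging, even if pushed through, gives the wrong quantitative shape. Each $v\in Y_2$ is charged to \emph{one} heavy node $u\in D(\ell_\beta,v)$ of degree $>B=\Theta((\beta+1)^2)$. By \Cref{lem:generalManySmallDegreeNodes} there are $<2\alpha|V|/B$ such $u$, and the multiplicity (number of small-layer ancestors $v$ of a given $u$ along strictly increasing-layer paths) is at most $\beta^{\log_{\beta+1}x}<x$ by \Cref{lem:inducedPartitionProp}(ii). That yields $|Y_2|<2\alpha|V|\cdot x/B$, which \emph{grows} with $x$ — the opposite of what you need — so no amount of refinement of the witness choice fixes this. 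The essential point you are missing is that you should not charge $v$ to a single witness; you should charge it to \emph{all} $>x^2$ nodes of $D(\ell_\beta,v)$. Then the large size of the dependency graph works \emph{for} you rather than being the obstacle: the total number of coins handed out is $>|Y_2|\cdot x^2$, while each node $u\in V_{\le}$ receives coins only from nodes reachable from $u$ by a path of strictly increasing layers of length $\le\log_{\beta+1}x$, so at most $\beta^{\log_{\beta+1}x}<x$ of them. Double counting gives $|Y_2|\cdot x^2 < |V|\cdot x$, i.e.\ $|Y_2|<|V|/x$, and since $1/\log_{\beta/(2\alpha)}(\beta+1)<1$ (as $\beta/(2\alpha)<\beta+1$) this is at most $|V|\cdot(2\alpha/\beta)^{\log_{\beta+1}x}$, so the two pieces sum to the claimed bound. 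In short: same decomposition and same $Y_1$ argument as the paper, but for $Y_2$ you should replace the heavy-witness charging with the all-of-$D(\ell_\beta,v)$ charging, which is both simpler and the only version that closes.
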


\begin{proof}
	Divide the nodes of $V$ into sets $V_{>}$ and $V_{\leq}$, such that $\ell_\beta(w)> \log_{\beta+1} x$ holds for nodes $w \in V_{>}$ and $\ell_\beta(w)\leq \log_{\beta+1} x$ holds for nodes $w \in V_{\leq}$. Furthermore, let us define $V_{\leq}^X \coloneqq V_{\leq} \cap X $. Observe that $V \setminus X = V_{>} \cup V_{\leq}^X$. Let us bound the terms separately. 
	
	\begin{itemize}
		\item[$V_{\leq}^X$:] By definition, it holds that $|D(\ell_\beta,w)|>x^2$ for nodes $w \in V_{\leq}^X$. Towards a contradiction, assume that $|V_{\leq}^X| > |V| \cdot x^{-1}$. Let every node $w \in V_{\leq}^X$ give every node in its dependency graph one coin. Note that all nodes in the dependency graph of $w$ are in $V_{\leq}$. The total number of coins is hence at least $|V| \cdot x^{-1} \cdot x^2= |V| \cdot x$. On the other hand, any node $u$ with $\ell_\beta(u) \leq \log_{\beta+1} x$ has at most $\beta$ neighbors in higher layers, and hence can be given a coin by at most $\beta^{\log_{\beta+1} x} < x$ nodes in $V$. Since the total number of coins is at least $|V_{\leq}| \cdot x$, and every node can only be charged at most $x$ coins, it must be that
		$
		|V|> (|V| \cdot x)/x = |V|~,
		$
		which is a contradiction and hence it holds that $|V_{\leq}^X| \leq |V| \cdot x^{-1}$.
		
		\item[$V_{>}$:] By \Cref{lem:generalManySmallDegreeNodes}, we get that for each layer $i$ the number of nodes that will receive a layer larger than $i$ shrinks by a factor of $2\alpha/\beta$. Therefore, the number of nodes with $\ell_\beta(w)> \log_{\beta+1} x$ is 
		\begin{equation*}
			|V_{>}| \leq |V| \cdot (2\alpha/\beta)^{\log_{\beta+1} x} = |V| \cdot (\beta/(2\alpha))^{-\log_{\beta+1} x} = |V| \cdot x^{-1/\log_{\beta/(2\alpha)} (\beta+1)}~.
		\end{equation*}
	\end{itemize}
	Hence, it holds that $|V \setminus X| \leq |V| \cdot x^{-1} + |V| \cdot x^{-1/\log_{\beta/(2\alpha)}(\beta+1)} \leq |V| \cdot 2^{1-\log x/\log_{\beta/(2\alpha)} (\beta+1)}$.
\end{proof}

\begin{lemma}[Closure under minimum for partial $\beta$-partitions] \label{lem:closureUnderMinimum}
	Let $\ell_1,\dots,\ell_k$ be partial $\beta$-partitions of a graph $G=(V,E)$ for any $\beta\geq 1$. The following function $\lambda: V\rightarrow \mathbb{N} \cup \{\infty\}$ assigning a layer to every node $v\in V$ is also a partial $\beta$-partition:
	\begin{align*}
		\lambda(v) = \min_{1\leq i\leq k} \ell_i(v) ~.
	\end{align*}
	
	Moreover, if $\ell_i(v) \neq \infty$ for any $i \in [1,k]$ then $\lambda(v) \neq \infty$.
\end{lemma}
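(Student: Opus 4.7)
The plan is to verify the two parts of \Cref{def:hPartition} for the function $\lambda$. The second (``moreover'') claim is immediate: if some $\ell_i(v) \neq \infty$, then $\lambda(v) = \min_j \ell_j(v) \leq \ell_i(v) < \infty$, so I can dispose of it in one line.

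For the main claim, I would fix an arbitrary node $v$ with $\lambda(v) \neq \infty$ and show that $\deg_{V_{\geq \lambda(v)}}(v) \leq \beta$ where $V_{\geq \lambda(v)} = \{u : \lambda(u) \geq \lambda(v)\}$. The key observation is that the minimum is attained: pick any index $i^\star$ with $\ell_{i^\star}(v) = \lambda(v)$, which is automatically finite. Since $\ell_{i^\star}$ is itself a partial $\beta$-partition and $\ell_{i^\star}(v) \neq \infty$, the ``good'' set $B := \{w \in N(v) : \ell_{i^\star}(w) \geq \ell_{i^\star}(v)\}$ has size at most $\beta$.

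The crux is then to show that every neighbor $w$ of $v$ with $\lambda(w) \geq \lambda(v)$ actually lies in $B$. This is just a chain of inequalities: $\ell_{i^\star}(w) \geq \min_j \ell_j(w) = \lambda(w) \geq \lambda(v) = \ell_{i^\star}(v)$, where the first inequality uses that $\ell_{i^\star}(w)$ is one of the terms being minimized. Hence $\{w \in N(v) : \lambda(w) \geq \lambda(v)\} \subseteq B$, so its size is at most $\beta$, which is exactly what \Cref{def:hPartition} requires.

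I do not expect any genuine obstacle here; the argument is purely combinatorial and hinges on the single nontrivial fact that taking a pointwise minimum can only \emph{decrease} each layer, which both preserves the ``$\neq \infty$'' property and shrinks (or preserves) the set of neighbors lying at or above $v$'s layer, relative to any chosen witness $\ell_{i^\star}$. The only subtlety worth noting is that different nodes $v$ may use different witnesses $i^\star$, but this is harmless since the degree condition in \Cref{def:hPartition} is checked pointwise at each $v$.
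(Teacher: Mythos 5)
Your proof is correct and follows essentially the same route as the paper: fix a witness index $i^\star$ attaining the minimum at $v$ (the paper takes $i^\star = 1$ without loss of generality), observe that taking pointwise minima can only move neighbors to lower layers, and conclude that the set of neighbors of $v$ at or above $\lambda(v)$ is contained in the corresponding set for $\ell_{i^\star}$, which has size at most $\beta$. The handling of the ``moreover'' claim is also the same one-liner.
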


\begin{proof}
	In order to prove that $\lambda$ is a partial $\beta$-partition, we need to show that $\lambda(w)\geq \lambda(v) \neq \infty$ for at most $\beta$ nodes $w \in N(v)$. Let $N^{\geq}(v)$ be the set of neighbors $w \in N(v)$ with $\lambda(w)\geq \lambda(v)$.
	
	Without loss of generality, we can assume that $\lambda(v)=\ell_1(v)$. Let $N^{<}_1(v)$ be the set of neighbors $w \in N(v)$ with $\ell_1(w)<\ell_1(v)$, and let $N^{\geq}_1(v)$ be the set of neighbors $w \in N(v)$ with $\ell_1(w)\geq \ell_1(v)$. Let us count how many nodes from sets $N^{<}_1(v)$ and $N^{\geq}_1(v)$ could be in $N^{\geq}(v)$.
	
	\begin{itemize}
		\item Because $\lambda(w)=\min\{\ell_1(w),\dots,\ell_k(w)\}$ for every node $w\in N^{<}_1(v)$, it holds that $\lambda(w)\leq \ell_1(w)$. Hence, no nodes from $N^{<}_1(v)$ can be in $N^{\geq}(v)$.
		\item Because $\ell_1$ is a partial $H$-partition, it holds that $|N^{\geq}_1(v)| \leq \beta$ so at most $\beta$ nodes from $N^{\geq}_1(v)$ can be in $N^{\geq}(v)$.
	\end{itemize}
	
	We conclude that $|N^{\geq}(v)|\leq \beta$, proving that $\lambda$ is a partial $\beta$-partition. If $\ell_i(v) \neq \infty$ for any $i \in [1,k]$ then $\lambda(v) \neq \infty$ because function $\lambda$ takes the minimum and $\ell_i(v) < \infty$.
\end{proof}

\section{$\beta$-partition in \ampc} \label{sec:AMPCPartition}

This section is dedicated to constructing a $\beta$-partition from \Cref{def:hPartition} in the \ampc model. Observe that we mean to construct a $\beta$-partition with an empty infinity layer. In particular, we prove \Cref{thm:partitioning}, which assumes the knowledge of the arboricity. We later remove this assumption in \Cref{lem:withoutArb}.

\partitioning*

\begin{proof}
	Consider an graph $G=(V,E)$ with $n$ nodes and arboricity $\alpha< n^{\frac{(\delta/c)^2 \cdot \log ((2+\eps)/2)}{2(2+\eps)}}$ (this requirement is lifted later in the proof). The algorithm operates on data stores $D_0,D_1,\dots$ such that the total number of data stores equals the number of rounds of the algorithm.
	Let us keep track of two types of key-value pairs in every data store $D_i$. The first type being a value from $\mathbb{N} \cup \{\infty \}$ returned for every node $v \in V$, which represents $v$'s layer in a (partial) $\beta$-partition. The second type is defined using the first type as follows. Let $G_i=(V_i,E_i)$ be the subgraph of $G=(V,E)$, induced by nodes $v$ with $D_i(v) = \infty$, which has arboricity at most $\alpha$ by \Cref{fact:ArbDec}. The edges of graph $G_i$ are stored in the data store $D_i$ as key-value pairs $(v,\deg_{G_i}(v),j) \arr (u,\deg_{G_i}(u))$, where $j\in [0,\deg_{G_i}(v)-1]$ and node $u$ is the $j$th neighbor of $v$ in $G_i$, according to some arbitrary ordering. For $D_0$, the values of the first type are set to $\infty$, and hence $G_0=G$ is just the input graph. 
	
	Let $P=n$ be the number of machines\footnote{\cite{Behnezhad2019-remote} provides the necessary details for employing this number of machines using \emph{parallel slackness}.} and $S = \Omega(n^\delta)$ be the local space of a machine, for any $\delta > 0$. In order to compute $D_{i+1}$ from $D_i$, we have the following protocol. Let $M_v$ be the machine responsible for node $v \in V$. Using $D_i$, or more precisely graph $G_i$, machine $M_v$ performs the \lca of \Cref{rem:LCAwithMinimum}. The query complexity of the \lca is $x^6$, so when $x=n^{\delta/c}$ for some $c>6$, the local and total space of our \ampc algorithm are respected. 
	
	Next, we need to argue that the partial $\beta$-partition of $G_i$ from \Cref{rem:LCAwithMinimum} is computable.
	After performing the \lca from \Cref{rem:LCAwithMinimum}, every machine $M_v$ has the $S_v$-induced $\beta$-partition $\sigma_{S_v,\beta}$ stored in memory. Machine $M_v$ can write the key-value pair $u \arr \sigma_{S_v,\beta}(u)$ for every node $u \in S_v$ for which $\sigma_{S_v,\beta}(u) \leq \log_{\beta+1} x$ to data store $D_{i+1}$. The separate set of machines that handles the DDS sorts through all written values and keeps the smallest $\sigma(u)$ value for every node $u \in V_i$. Observe that in order to make the (final outputted) $\beta$-partition consistent throughout the algorithm, we need to record $\sigma(u)$ plus the maximum numbered layer of any node in $D_i$, effectively appending the layers. Additionally, using $D_i$, these machines can compute $\deg_{G_{i+1}}(u)$ for every node $u \in G_{i+1}$ and port the edges of $G_{i+1}$ to $D_{i+1}$. 
	
	Let us analyze the size of $|V_i|$. By applying \Cref{rem:LCAwithMinimum} for every rounds, it holds that
	\begin{align}
		|V_{i}| &\leq n \cdot 2^{i \cdot (1 - \log x/\log_{\beta/(2\alpha)} (\beta+1))} \nonumber\\ 
		&= 2^{i/ (\log_{\beta/(2\alpha)} (\beta+1)) \cdot (\log_{\beta/(2\alpha)} (\beta+1) - \log x)} \label{line:drop}~. 
	\end{align}
	
	When $x=n^{\delta/c}$, it holds that 
	\begin{align}
		\log_{\beta/(2\alpha)} (\beta+1) - \delta/c \cdot \log n 
		&< \frac{\log 2\beta}{\log \beta - \log 2\alpha} - \delta/c \cdot \log n \nonumber\\
		&\stackrel{*}{\leq} \frac{\log (2(2+\eps)\alpha)}{\log ((2+\eps)\alpha) - \log 2\alpha} - \delta/c \cdot \log n \nonumber\\
		&= \log_{(2+\eps)/2} (2(2+\eps)\alpha)) - \delta/c \cdot \log n \nonumber\\
		&< \log_{(2+\eps)/2} (\alpha^{2(2+\eps)})) - \delta/c \cdot \log n \nonumber\\
		&< (\delta/c)^2 \cdot \log ((2+\eps)/2) \cdot \frac{\log n}{\log ((2+\eps)/2)} - \delta/c \cdot \log n \nonumber\\
		&= -(\delta/c - (\delta/c)^2) \cdot \log n~, \label{line:insert}
	\end{align}
	
	where in $*$ we use $\beta \geq (2+\eps)\alpha$. Let $d \coloneqq (\delta/c - (\delta/c)^2)$. Observe that since $\delta/c<1$, it holds that $d > 0$. Plugging \Cref{line:insert} into \Cref{line:drop} gives
	\begin{align*}
		|V_{i}| &\leq 2^{i/ (\log_{\beta/(2\alpha)} (\beta+1)) \cdot (-d \log n)}~. 
	\end{align*}
	
	Hence, after $k=\log_{\beta/(2\alpha)} (\beta+1)/d = O(\log_{\beta/(2\alpha)} \beta)$ rounds, it holds that 
	\begin{align*}
		|V_{k}| \leq n \cdot 2^{-\log n} = 0~.
	\end{align*}
	
	By \Cref{rem:LCAwithMinimum}, the maximum layer in the partial $\beta$-partition computed for every data store is $\log_{\beta+1} x$. Hence, after $O(\log_{\beta/(2\alpha)} \beta)$ rounds, the size of the $\beta$-partition is $=O(\log_{\beta/(2\alpha)} n)$.
	
	When $\alpha \geq n^{\frac{(\delta/c)^2 \cdot \log ((2+\eps)/2)}{2(2+\eps)}}$, we employ a different algorithm altogether. We simulate the $H$-partition algorithm of \cite{BE10}. For every data store $D_i$, instead of performing the $(x,\beta,F)$-coin dropping game, every node of degree $\leq \beta$ simply places itself in layer $i$. By \Cref{lem:generalManySmallDegreeNodes}, for any $\beta \geq (2+\eps)\alpha$, this algorithm results in a $\beta$-partition of size $O(\log_{\beta/(2\alpha)} n)$ in $O(\log_{\beta/(2\alpha)} n) = O(\log_{\beta/(2\alpha)} \beta)$ time.
\end{proof}

\Cref{thm:partitioning} assumes the knowledge of the arboricity $\alpha$ of the graph. In \Cref{lem:withoutArb} we remove this assumption.

\begin{lemma} \label{lem:withoutArb}
	\Cref{thm:partitioning} can be applied without the knowledge of the arboricity of the graph. 
\end{lemma}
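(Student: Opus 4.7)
The plan is to replace the unknown $\alpha$ by a doubling search over the arboricity parameter, running $O(\log n)$ copies of \Cref{thm:partitioning} in parallel, one per guess $\alpha_i' := 2^i$ for $i = 0,1,\ldots,\lceil \log n\rceil$, each with target parameter $\beta_i := (2+\eps)\alpha_i'$ (with an analogous doubling of the target $\beta$ parameter if the caller specifies $\beta/\alpha$ rather than $\beta$ directly). Each instance produces a candidate layering $\lambda_i$. Since \Cref{thm:partitioning}'s guarantees only kick in when $\beta_i \geq (2+\eps)\alpha$, every instance with $\alpha_i' \geq \alpha$ is guaranteed to output a valid $\beta_i$-partition, while smaller guesses may produce something invalid.

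After all instances terminate, we verify each output in $O(1)$ \ampc rounds. For every edge $\{u,v\}$ and every instance $i$, we emit a key-value pair keyed on $u$ whose value indicates whether $\lambda_i(v) \geq \lambda_i(u)$; a constant-round sort-and-aggregate on the DDS yields, per node and per instance, the count of neighbors in layers $\geq \lambda_i(u)$. We flag instance $i$ as invalid if any node's count exceeds $\beta_i$, take a global OR to obtain one validity bit per instance, and finally return $\lambda_{i^*}$ for the smallest valid $i^*$.

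Correctness is immediate: every $i \geq \lceil \log \alpha\rceil$ is guaranteed valid by \Cref{thm:partitioning}, so $i^* \leq \lceil \log \alpha\rceil$ and $\beta_{i^*} \leq 2(2+\eps)\alpha = O(\alpha)$. The partition size and round count therefore match \Cref{thm:partitioning}'s guarantees with the true $\alpha$, up to constants absorbed into the big-$O$. The $O(\log n)$-fold increase in total space coming from the parallel instances is absorbed by invoking each instance with local space $n^{\delta'}$ for any constant $\delta' < \delta$, keeping the total space at $O(n^{1+\delta})$.

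The main obstacle, and essentially the only thing that requires care, is performing the verification in $O(1)$ \ampc rounds when the maximum degree may far exceed the local space $n^\delta$. This is standard in low-space \ampc: one distributes each neighborhood across multiple machines via edge-indexed entries in the DDS and performs the per-node aggregation by a constant-round sort. Once this verification primitive is in place, the lemma follows directly from \Cref{thm:partitioning}.
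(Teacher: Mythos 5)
Your guess-and-verify strategy is in the right spirit, but as written it leaves the round complexity unjustified, and it is precisely here that the paper's proof does real work. The key difficulty is the behavior of instances with $\alpha_i' < \alpha$: \Cref{thm:partitioning} gives \emph{no} termination guarantee for them (indeed, if $\beta_i < 2\alpha$, the iterative peeling inside \Cref{thm:partitioning} can make zero progress, since the residual graph may have no node of degree at most $\beta_i$). So the phrase ``after all instances terminate'' is ill-defined unless every instance $i$ is explicitly cut off at some budget $T_i$ chosen \emph{using only the guess} $\alpha_i'$. The natural choice is $T_i = O(\log_{\beta_i/(2\alpha_i')} \beta_i) = O(i)$, but then the worst-case cap over your $O(\log n)$ parallel guesses is $O(\log n)$, not $O(\log_{\beta/(2\alpha)}\beta)$. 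One can in fact argue that every instance terminates or is cut off within $O(\log\alpha)$ rounds (capped instances have $T_i = O(i) = O(\log\alpha)$ for $i \le \log\alpha$; instances with $\alpha_i' \ge \alpha$ complete naturally in at most that time), but this requires an explicit argument plus a per-round detection of ``all instances are done,'' none of which appears in your proof, and your claim that ``the round count therefore match[es]'' is asserted rather than shown.

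The paper avoids this by a two-phase scheme. Phase one is a \emph{sequential doubly-exponential} search with guesses $\alpha_i = 2^{2^i}$: the per-instance budgets $O(2^i)$ form a geometric sum, so the total sequential cost telescopes to $O(\log_{\beta/(2\alpha)}\beta)$. (A sequential single-doubling search as in your proposal would instead sum to $\Theta(\log^2\alpha)$.) This yields a coarse estimate $\alpha_k \in [\alpha, \alpha^2)$, after which a parallel $(1+\eps)$-geometric refinement over the now-bounded range $[\sqrt{\alpha_k}, \alpha_k]$ pinpoints $\alpha$ up to a factor $1+\eps$. That fine granularity also addresses another gap in your approach: a factor-$2$ overshoot of $\alpha$ produces a $2(2+\eps)\alpha$-partition, which cannot be compensated by shrinking $\eps$, whereas the $(1+\eps')$ overshoot can be absorbed by starting with a slightly smaller $\eps$, as the paper notes. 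Finally, a small point on your verification: checking only the degree condition can miss an instance that leaves some nodes unlayered; you also need to reject any $\lambda_i$ with a nonempty infinity layer.
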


\begin{proof}[Proof of \Cref{lem:withoutArb}]
	Let $\beta$ be defined using arboricity $\alpha$. When $\alpha$ is unknown, we can still find the $\beta$-partition with no additional cost in the runtime or memory by employing a guessing scheme for $\alpha$.
	
	First, we perform a series of sequential executions of \Cref{thm:partitioning}, such that instance $i$ computes a $\beta_i$-partition, using the $i$th guess for arboricity $\alpha_i=2^{2^i}$, $i \in [0,\log \log n]$. We terminate after the first guess $\alpha_k$ which returns a $\beta_k$-partition. Due to the double exponential guessing and the fact that a correct guess we yield partition, we know that $a_k < \alpha^2$.
	
	Next, we perform a number of parallel executions of \Cref{thm:partitioning}, such that instance $i$ computes a $\beta_i$-partition, where the $i$th guess for arboricity is $\alpha_i=\sqrt{a_k} \cdot (1+\eps)^i$, $i \in [0,\log_{1+\eps} \sqrt{a_k}]$ for some $\eps>0$. Observe that for some instance $j$, it must hold that $\alpha_j < (1+\eps)\alpha$, producing a $\beta_i$-partition of size $O(\log_{\beta_i/2\alpha_i} n)=O(\log_{\beta/2\alpha} n)$. As our solution, we pick the smallest instance that produces a partition.
	
	Let us analyze the total runtime when employing the guessing scheme above. The sequential executions take
	\begin{align*}
		\sum_{i=0}^{k} 
		O(\log_{\beta_i/2\alpha_i} \beta_i) \leq \sum_{i=0}^{\infty} \left( \frac{1}{2^i}\right) O(\log_{\beta/2\alpha} \alpha) + O(\log_{\beta/2\alpha} \alpha^2) = O(\log_{\beta/2\alpha} \alpha) = O(\log_{\beta/2\alpha} \beta)
	\end{align*}
	
	rounds in total. The parallel executions take $O(\log_{\beta/2\alpha} \alpha^2) = O(\log_{\beta/2\alpha} \beta)$ rounds. The parallel executions require an additional $\log_{1+\eps} \sqrt{a_k} = O(\log n)$ factor of space, which can be allowed by scaling the constant $\delta$, corresponding to the local space $O(n^\delta)$.
	
	It is worth mentioning that since $\alpha_j$ may differ from $\alpha$ by a $(1+\eps)$ factor, the resulting $\beta_j$ may also differ from $\beta$ by some $\eps'>1$ factor. This can be mended by simply using a slightly smaller $\beta$ value to begin with, which is always possible due to the slack in requirement $\beta \geq (2+\epsilon)\alpha,~\epsilon>0$. 
\end{proof}
\section{Coloring Graphs in \ampc}
\label{sec:coloring}
In this section we will prove the three results in \Cref{thm:coloring}. We first restate the theorem as a reference for the reader.

\coloringthm*
We will prove the first result in \Cref{sec:fastColoring}, the second result it \Cref{sec:mediumColoring}, and the third result in \Cref{sec:slowColoring}. In all three cases we proceed by simulating a \local algorithm in the \ampc model by showing that in order to simulate $x$ rounds of the \local algorithm, each node needs to know at most $O(n^\delta)$ sized information around its neighborhood. This simulation can be done in one round of \ampc by assigning each node to a dedicated machine which will adaptively gather the neighborhood information and locally simulate $x$ rounds of the \local algorithm. In order for the \local algorithm simulation to work, we will assume that $\alpha \le n^{\delta/(1+\eps)}$ for some constant $\eps>0$. We handle case of $\alpha > n^{\delta/(1+\eps)}$, separately in \Cref{sec:handlingLargeAlpha}. 

\subsection{An $O(\alpha^{2+\eps})$-coloring in $O(1/\eps)$ Rounds}
\label{sec:fastColoring}
Compute a $\beta$-partition using \Cref{thm:partitioning} with $\beta = O(\alpha^{1+\eps})$ for a constant $\eps>0$. This will require $O(1/\eps)$ \ampc rounds and the $\beta$-partition will have $O(\eps^{-1}\log_{\alpha}n)$ layers. 

To compute this coloring, we simulate the Arb-Linial coloring algorithm of \cite{BE10} at each node. The Arb-Linial algorithm takes as input a $\beta$-partition and produces an $O(\beta^2)$-coloring which using $O(\log^* n)$ rounds in the \local model. Let us orient all edges from lower layer to higher layer in the $\beta$-partition and edges between nodes in the same layer are oriented arbitrarily. An important property of the Arb-Linial algorithm is that it is \emph{one-sided}, that is a node $v$ only needs to know its out-neighborhood to compute its output. 

In the algorithm we start with a $\poly(n)$-coloring where each node picks its ID as its color. In one round of algorithm, an $m$-coloring is reduced to an $O(\beta^2\log m)$-coloring. Therefore, after one round, we have an $O(\beta^2\log n)$-coloring, and this eventually results in an $O(\beta^2\log \beta)$-coloring in $O(\log^* n)$ rounds. Then in one final round, the $O(\beta^2\log \beta)$-coloring is reduced to an $O(\beta^2)$-coloring.

Note that if $\alpha^{\eps} > \log n$, we don't need to simulate all $\log^* n$ rounds of the Arb-Linial algorithm. If we just simulate the first round of Arb-Linial, we get an $O(\beta^2\log n)$-coloring. Since $\beta = O(\alpha^{1+\eps})$ and $\log n < \alpha^{\eps}$, $O(\beta^2\log n) = O(\alpha^{2+3\eps})$, which we can say is $O(\alpha^{2+\eps})$, by replacing $\eps$ by $\eps/3$. We can simulate the first round of Arb-Linial in $1$ \ampc round by gathering the at most $\beta$ out-neighbors of each node. We can do this since we assumed $\alpha \le n^{\delta/(1+\eps)}$, which implies $\beta \le n^{\delta}$.

Now we describe how to proceed when $\alpha^{\eps} \le \log n$. In this case, we simulate all $O(\log^* n)$ rounds of the Arb-Linial algorithm in $1$ \ampc round by collecting the $\beta^{O(\log^* n)}$ size out-neighborhood at each node $v$. In order for this to work, we need the entire $O(\log^*n)$-hop out-neighborhood to fit into the memory of a single machine, that is, $\beta^{O(\log^* n)} \le n^\delta$ which is true since $\alpha^{\eps} \le \log n$. Therefore, we get number of colors to be $O(\beta^2) = O(\alpha^{2+2\eps})$, which we can say is $O(\alpha^{2+\eps})$, by replacing $\eps$ by $\eps/2$.

\subsection{An $O(\alpha^2)$-coloring in $O(\log \alpha)$ Rounds}
\label{sec:mediumColoring}
Compute a $\beta$-partition using \Cref{thm:partitioning} with $\beta = (2+\eps)\alpha$ for a constant $\eps>0$. This will require $O(\eps^{-1}\log \alpha)$ \ampc rounds and the $\beta$-partition will have $O(\eps^{-1}\log n)$ layers.

If $\alpha > 2^{\log^* n}$, we simulate the Arb-Linial algorithm one round at a time. We can simulate one round of Arb-Linial in $1$ \ampc round by gathering the at most $\beta$ out-neighbors of each node. We can do this since we assumed $\alpha \le n^{\delta/(1+\eps)}$, which implies $\beta \le n^{\delta}$. The overall simulation will require $O(\log^* n) = O(\log \alpha)$ \ampc rounds.

If $\alpha \le 2^{\log^* n}$, we simulate all $O(\log^* n)$ rounds of the Arb-Linial algorithm in $1$ \ampc round by collecting the $\beta^{O(\log^* n)}$ size out-neighborhood at each node $v$. In order for this to work, we need $\beta^{O(\log^* n)} \le n^\delta$ which is true since $\alpha \le 2^{\log^* n}$. In both cases, we get number of colors to be $O(\beta^2) = O(\alpha^2)$.

\subsection{A $((2+\eps)\alpha + 1)$-coloring in $O(\alpha\log \alpha)$ rounds} 
\label{sec:slowColoring}
Compute a $\beta$-partition using \Cref{thm:partitioning} with $\beta = (2+\eps)\alpha$ for a constant $\eps>0$. This will require $O(\eps^{-1}\log \alpha)$ \ampc rounds and the $\beta$-partition will have $O(\eps^{-1}\log n)$ layers.

We first construct an initial (improper) coloring of the nodes using $(\beta+1)$-colors. To do this we use a $(\Delta+1)$-coloring algorithm by Kuhn and Wattenhofer \cite{Kuhn2006On} which runs in $O(\Delta \log \Delta + \log^* n)$ rounds in the \local model. This algorithm works in two stages, in the first stage they compute an $O(\Delta^2)$-coloring in $O(\log^* n)$ rounds using Linial's coloring algorithm \cite{linial92}, and in the second stage, they apply an iterative color reduction technique that reduces the $O(\Delta^2)$-coloring to a $(\Delta + 1)$-coloring in $O(\Delta \log \Delta)$ rounds. We will use this algorithm to find an initial coloring of the nodes. We first compute a $(\beta + 1)$-coloring of the subgraph induced by nodes in a single layer by simulating the Kuhn-Wattenhofer algorithm independently in each layer in parallel.

If $\alpha > \log^* n$, we can simulate the entire algorithm one round at a time in $O(\alpha \log \alpha)$ \ampc rounds. Again, we can simulate a single round of the algorithm in $1$ \ampc round by gathering the at most $\beta$ out-neighbors of each node. We can do this since we assumed $\alpha \le n^{\delta/(1+\eps)}$, which implies $\beta \le n^{\delta}$.

If $\alpha \le \log^* n$, we simulate the two stages separately. First, we compute the $O(\beta^2)$-coloring in $O(1)$ \ampc rounds by having each node collect a $O(\log^* n)$-radius ball around it. In order for the simulation to work, we need that $\alpha^{O(\log^* n)} \le n^{\delta}$ which is true for $\alpha \le \log^* n$. Second, we simulate the color reduction stage one round at a time, which requires $O(\beta \log \beta)$ \ampc rounds. The initial coloring is not yet a proper coloring of the entire graph, since we can have conflicts at edges between nodes in two different layers. 

We fix these conflicts greedily by simulating the following centralized recoloring process: all nodes in the topmost layer fix their initial color to be their final color. Start with the nodes in the second highest layer and go down to the lowest layer, one layer at a time. In each layer $i$, process nodes in decreasing order of their initial color and each node picks the highest available color such that there are no conflicts with neighbors that have finalized their color. 

Note that only nodes in the same or higher layer can finalize their color, and a node $v$ has at most $\beta$ neighbors in the same or higher layer. So at most $\beta$ colors in their palette of size $\beta+1$ are blocked, so there will always be at least one color $v$ can pick so that there are no conflicts with neighbors that have finalized their color.

In order to simulate the recoloring process in \ampc, we make the observation that the final color of a node $v$ only depends on the final colors of neighbors in the same layer having higher initial color, and the neighbors in higher layers. In other words, if we want to compute the final color of node $v$, we first need to compute the final colors of all these neighbors, so that we can find the highest available color for $v$. We compute the final colors of the neighbors of $v$ recursively, i.e., by computing the final colors of their neighbors in the same layer having higher initial color, and the neighbors in higher layers. Therefore, in order to compute the final color of a node $v$, we will need to compute the final color of all nodes $u$ such that the each edge in the path from $v$ to $u$ either goes from one initial color to a higher initial color or goes from one layer to a higher layer.

Since there are $O(\eps^{-1}\log n)$ layers and $(\beta+1)$ initial colors, all such nodes $u$ are at most $O((\beta/\eps) \log n)$ hops away from $v$. Moreover, all such nodes $u$ are in the same or higher layer as $v$, which means that to compute the final color of $v$ we need to collect a subgraph of size at most $\beta^{O((\beta/\eps) \log n)} = n^{O((\beta/\eps) \log \beta)})$. This is too large to fit in the memory of a single machine, so we process the layers in batches of size $(c\delta/\beta) \log_\beta n$, where $c$ is a constant chosen such that a node in this batch of layers only has to collect a ball of size $\beta^{O(\beta \cdot (c\delta/\beta)\log_{\beta} n)} = O(n^\delta)$, and there are $O((\beta/\eps \delta)\log \beta)$ batches. Each batch requires $1$ \ampc round so we use $O((\beta/\eps \delta)\log \beta) = O(\alpha \log \alpha)$ \ampc rounds for the overall simulation of the recoloring process.

\subsection{Coloring when $\alpha$ is Large}
\label{sec:handlingLargeAlpha}

We begin with proving \Cref{thm:derandomizedColoring}, which we will use as an important building block. For ease of reading, we restate the theorem before providing the proof.

\derandomizedcoloring*
\begin{proof}
	We will prove this theorem by providing a deterministic, non-component-stable, \mpc algorithm that produces the coloring in $O(\log_x n)$ rounds using $O(n+m)\poly\log n$ global memory. Since the \ampc model can simulate any \mpc algorithm without any loss in round complexity, the theorem follows.
	
	Consider the following randomized trial with pairwise independent random coins: Every node picks a random color from $1, \dots, 2x\Delta$. Let $Y_e$ be an indicator random variable for the event that edge $e$ is monochromatic, and let $Y = \sum_e Y_e$ be the number of monochromatic edges produced by the randomized trial. Since we use pairwise independent random coins, the probability that $Y_e = 1$ is $1/(2x\Delta)$. By the handshaking lemma and linearity of expectation, $\E[Y] \le \sum_{v} \sum_{e \ni v} \E[Y_e] = \sum_{v} \deg(v)/(2x\Delta) \le |V|/(2x)$.
	
	We derandomize this process using the method of conditional expectations, which was first introduced in the context of the \mpc model in \cite{CPS20}. Since we use pairwise independence, we can reduce the number of random coins required to $O(\log n)$-bits by using a family of pairwise independent hash functions. Then we fix these random bits in batches of size $\lfloor (\delta/3) \log_2 n \rfloor$, and we denote the $i^{th}$ batch by $B_i$. The invariant we will maintain when fixing the random bits in each batch is that $\E[Y | B_1 = b_1, \dots, B_i = b_i] \le \E[Y]$.
	
	When fixing batch $B_i$, let us assume inductively that the invariant is maintained for all previous batches, i.e. we have fixed the random bits such that $\E[Y | B_1 = b_1, \dots, B_{i-1} = b_{i-1}] \le \E[Y]$. By the law of total expectations, there exists an assignment $B_i = b_i$ such that the invariant is maintained. There are at most $n^{\delta/3}$ assignments to $B_i$, and we will calculate the value of the conditional expectation $\E[Y | B_1 = b_1, \dots, B_i = b_i]$ for each assignment $b_i$ in parallel and pick the minimum.
	
	Let us see first how to calculate the conditional expectation for a single assignment $B_i= b_i$. The key idea is that we decompose $\E[Y | B_1 = b_1, \dots, B_i = b_i]$ into $\sum_{e}\E[Y_e | B_1 = b_1, \dots, B_i = b_i]$, so the task is to compute a sum of values where each value can be computed locally by the machine that stores edge $e$. We create a ``broadcast tree'' on the set of machines, which is a balanced $n^{\delta/2}$-ary tree of depth $O(1/\delta)$. Each machine $M$ locally computes $\E[Y_e | B_1 = b_1, \dots, B_i = b_i]$ for all edges $e$ stored at $M$ and adds them up. The leaves in the broadcast tree simply send this sum to their parent. Once a machine in the broadcast tree receives the values from all its children, it sends the sum of all these values and its locally computed sum to its parent. In each round, a machine sends at most one value and receives at most $n^{\delta/2}$ values. In $O(1/\delta)$ rounds, the root of the broadcast tree will calculate the value of the conditional expectation $\E[Y | B_1 = b_1, \dots, B_i = b_i]$.
	
	Since there are at most $n^{\delta/3}$ assignments to $B_i$, it is easy to see that the machines can calculate the conditional expectation value for all assignments $b_i$ in parallel over the same broadcast tree. In each round, a machine sends at most $n^{\delta/3}$ values and receives at most $n^{5\delta/6}$ values ($n^{\delta/3}$ values from each of its $n^{\delta/2}$ children in the broadcast tree). Therefore the root of the tree can pick the assignment $b_i$ which minimizes the conditional expectation, and this maintains the invariant for our assignment to $B_i$. Since there are $O(1/\delta)$ batches, we are able to fix all the random bits in $O(1/\delta^2)$ rounds.
	
	Now we have a deterministic algorithm that colors nodes with colors $1, \dots, 2x\Delta$, such that there are at most $n/(2x)$ monochromatic edges. We fix the color of all nodes that do not have an incident monochromatic edge and all other nodes are uncolored. Let $U$ be the set of uncolored nodes. We have that $|U| \le n/x$. If we repeat the same randomized process for the uncolored nodes, i.e., each node in $U$ picks a color in $1, \dots, 2x\Delta$ using pairwise independent random bits. We still have that the expected number of monochromatic edges is $\E[Y] \le |U|/(2x) \le n/(2x^2)$. We again derandomize this process, and we get a coloring of nodes with colors $1, \dots, 2x\Delta$, such that there are at most $n/(2x^2)$ monochromatic edges. The new set of uncolored nodes has size at most $n/x^2$. Therefore, if we repeat the process $i$ times we get a partial coloring of nodes with colors $1, \dots, 2x\Delta$, such that the number of uncolored nodes is at most $n/x^i$. We continue until no uncolored vertices remain. The number of repetitions is at most $O(\log_x n)$, and we use at most $2x\Delta$ colors.
\end{proof}

We prove the first and second parts of \Cref{thm:coloring} by computing an $\alpha^{1+\eps}$ coloring in $O(1/\eps)$ rounds. Compute a $\beta$-partition using \Cref{thm:partitioning} with $\beta = \alpha^{1+\eps}$ for a constant $\eps>0$. This will require $O(1/\eps)$ \ampc rounds and the $\beta$-partition will have $O(\eps^{-1}\log_{\alpha}n) = O(1/\eps)$ layers.

We color each layer of the $\beta$-partition independently and in parallel with each layer getting a fresh palette using \Cref{thm:derandomizedColoring} with $x = \alpha^\eps$. Note that to run the algorithm, we just need to know for each edge what the layer numbers of the two end points are: if both end points belong to layer $i$, we include this edge in the algorithm for layer $i$, otherwise we ignore the edge, since it will never be monochromatic. Since the subgraph induced by each layer has maximum degree $\beta$, The total number of colors we use in a single layer is $O(\alpha^{\eps}\beta)$. Therefore, the total number of colors used over all layers is $O(\eps^{-1} \alpha^{\eps}\beta) \le \alpha^{1+3\eps}$. We can now adjust $\eps$ to be $\eps/3$ to obtain an $\alpha^{1+\eps}$ coloring. The coloring using \Cref{thm:derandomizedColoring} requires $O(1/\eps)$ \ampc rounds, so overall we require $O(1/\eps)$ \ampc rounds.

We now prove the third part of \Cref{thm:coloring}. First, we compute a $\beta$-partition using \Cref{thm:partitioning} with $\beta = (2+\eps)\alpha$ for a constant $\eps>0$. This will require $O(\eps^{-1}\log \alpha)$ \ampc rounds and the $\beta$-partition will have $O(\eps^{-1}\log n)$ layers. Then we construct an initial coloring where we color the subgraphs induced by each layer independently and in parallel using \Cref{thm:derandomizedColoring} with with $x = 2$. Since the max degree of the subgraph induced by nodes in a single layer is at most $\beta$, this creates a coloring of each layer with colors in $\{1, \dots, 4\beta\}$. Note that to run the algorithm, we just need to know for each edge what the layer numbers of the two end points are: if both end points belong to layer $i$, we include this edge in the algorithm for layer $i$, otherwise we ignore the edge, even though it can be monochromatic.

We orient all edges from lower layer to higher layer, and for edges with end points in the same layer, we orient them from lower initial color to higher initial color. Since we started with a $\beta$-partition, each node has at most $\beta$ outgoing neighbors. We use this orientation to produce a proper $(\beta + 1)$-coloring. We iterate through the layers sequentially from the topmost layer to the bottom most layer. Nodes in layer $i$ are processed in decreasing order of their initial color. Each node $v$ in layer $i$ with initial color $c$ (in parallel) picks as its final color the smallest available color in $\{1, \dots, \beta+1\}$. Due to the order in which we process the nodes, only the outgoing neighbors of $v$ have picked their final color before $v$ picks its final color. Therefore, at most $\beta$ colors are blocked for $v$, so it can always find an available color in $\{1, \dots, \beta+1\}$. Since $\beta$ can be much larger than the local space, we need to be a bit careful in finding the smallest available color. In order to do this, we sort in $O(1)$ rounds, the set of outgoing neighbors $u$ of nodes $v$ having layer $i$ and initial color $c$ with the key $(ID(v), \mathrm{col}(u))$, where $\mathrm{col}(u)$ is the final color picked by $u$ (constant round deterministic sorting is a well known \ampc/\mpc primitive \cite{Czumaj2020, Goodrich99, Goodrich2011}). This means that all outgoing neighbors of $v$ are stored in contiguous machines in increasing order of their final color. For the vertices whose set of outgoing neighbors spans a single machine, the smallest available color can be locally computed by that machine. For a vertex $v$ whose set of outgoing edges span multiple machines, we create a virtual broadcast tree (which is an $n^{\delta/2}$-ary balanced tree) on the set of machines that store the outgoing edges of $v$, and aggregate the smallest available color at the root. Note that a machine can be part of at most two such broadcast trees. It is easy to see that all the aggregations on the virtual broadcast trees can be implemented on a single broadcast tree on the set of all machines. This allows us to find in $O(1)$ rounds the smallest available color in $\{1, \dots, \beta+1\}$ for each node $v$ in layer $i$ with initial color $c$. Therefore, the time required for picking the final colors is $O(\beta \log n) = O(\alpha \log \alpha)$ rounds, which dominates the overall runtime.

\newpage
\bibliographystyle{alpha}
\bibliography{h-partition}
	
\end{document}